\newcommand{\qed}{\ensuremath{\hfill\square}}
\newcounter{theoremcnt}[section]
\renewcommand{\thetheoremcnt}{\thesection.\arabic{theoremcnt}}
\newenvironment{definition}[1]%
{\begin{trivlist}\refstepcounter{theoremcnt}
\item[]\textbf{Definition~\thetheoremcnt~(#1).}}{\end{trivlist}}
\newenvironment{lemma}%
{\begin{trivlist}\refstepcounter{theoremcnt}
\item[]\bf Lemma~\thetheoremcnt.~\rm}{\end{trivlist}}
\newenvironment{proof}%
{\begin{trivlist}\item[]{\bf Proof.}}{\qed\end{trivlist}}
\newenvironment{theorem}%
{\begin{trivlist}\refstepcounter{theoremcnt}
\item[]\bf Theorem~\thetheoremcnt.~\rm}{\end{trivlist}}
\newenvironment{lemma-repeat}[1]%
{\begin{trivlist}
\item[]\bf Lemma~#1.~\rm}{\end{trivlist}}
\newenvironment{compactenum}
{\begin{enumerate}}
{\end{enumerate}}
\newcommand{\bis}{\mathrel{\raisebox{-.2ex}[1.5ex][0ex]
                 {$\stackrel{\raisebox{-0.75ex}[1ex][0ex]
                               {\normalsize $\leftrightarrow$}}
                          {\mbox{--\hspace{-0.40ex}--}}
                  $}}}
\renewcommand{\bis}{\mathrel{\,%
  \raisebox{.3ex}{$\underline{\makebox[.7em]{$\leftrightarrow$}}$}\,}}
\newcommand{\pijl}[2]{\mathord{\: \stackrel{#1}{\rightarrow_{#2}} \:}}
\newcommand{\npijl}[1]{\mathord{\: \stackrel{#1}{\not\rightarrow} \:}}
\renewcommand{\pijl}[2]{\mathord{\: \xrightarrow{#1}_{#2} \:}}
\newcommand{\naarpijl}[1]{\mathord{\: \stackrel{#1}{\leftarrow} \:}}
\newcommand{\Nat}{\mathbb{N}}
\newcommand{\splitpi}{\textit{split}\xspace}
\newcommand{\cosplitpi}{\textit{cosplit}\xspace}
\newcommand{\setB}{{\ensuremath{\boldsymbol{B}}}\xspace}
\newcommand{\setBp}{{\ensuremath{\boldsymbol{B'}}}\xspace}
\newcommand{\setC}{{\ensuremath{\boldsymbol{C}}}\xspace}
\newcommand{\B}{{\ensuremath{B}}\xspace}
\newcommand{\R}[2]{{\ensuremath{#1\ R\ #2}}\xspace}
\newcommand{\bottomstates}{\textit{btm-sts}\xspace}
\newcommand{\nonbottomstates}{\textit{non-btm-sts}\xspace}
\newcommand{\newbottomstates}{\textit{new-btm-sts}\xspace}
\newcommand{\markedbottomstates}{\textit{mrkd-btm-sts}\xspace}
\newcommand{\markednonbottomstates}{\textit{mrkd-non-btm-sts}\xspace}
\newcommand{\constellationref}{\textit{constln-ref}\xspace} 
\newcommand{\coconstellationref}{\textit{coconstln-ref}\xspace} 
\newcommand{\inconstellationref}{\textit{inconstln-ref}\xspace} 
\newcommand{\reff}{\textit{ref}\xspace} 
\newcommand{\transitionlist}{\textit{trans-list}\xspace} 
\newcommand{\toconstellationcount}{\textit{to-constln-cnt}\xspace}
\newcommand{\toconstellations}{\textit{to-constlns}\xspace} 
\newcommand{\toconstellationref}{\textit{to-constln-ref}\xspace} 
\newcommand{\block}{\textit{block}\xspace}
\newcommand{\blocks}{\textit{blocks}\xspace}
\newcommand{\inertcounter}{\textit{inert-cnt}\xspace}
\newcommand{\constellationcounter}{\textit{constln-cnt}\xspace}
\newcommand{\coconstellationcounter}{\textit{coconstln-cnt}\xspace}
\newcommand{\splittableblocks}{\textit{splittable-blks}\xspace}
\newcommand{\constellation}{\textit{constln}\xspace}
\newcommand{\nontrivialconstellations}{\textit{non-trivial-constlns}\xspace}
\newcommand{\trivialconstellations}{\textit{trivial-constlns}\xspace}
\newcommand{\size}{{\it size}\xspace}
\begin{document}

\pagestyle{headings}

\date{}
\title{\sf An $O(m\log n)$ Algorithm for \\  Stuttering Equivalence and Branching Bisimulation}
%\titlerunning{An $O(m\log n)$ Algorithm for Stuttering Equivalence}

\author{\sf Jan Friso Groote and Anton Wijs\\
\footnotesize{ Department of Mathematics and Computer Science,}
\footnotesize{Eindhoven University of Technology}\\
\footnotesize{P.O.~Box 513, 5600 MB Eindhoven, The Netherlands}\\
\footnotesize{\tt \{J.F.Groote,A.J.Wijs\}@tue.nl}}
%\affil[2]{Departement of Mathematics and Computer Science\\
%Eindhoven University of Technology\\
%P.O.~Box 513, 5600 MB Eindhoven, The Netherlands}

%\authorrunning{J.\,F.~Groote and A.\,J.~Wijs} 
%\Copyright{J.F. Groote and A.J. Wijs}

%\subjclass{D.2.4 Software/program verification}
%\keywords{Branching bisimulation; Algorithm}
%\institute{Departement of Mathematics and Computer Science\\
%Eindhoven University of Technology\\
%P.O.~Box 513, 5600 MB Eindhoven, The Netherlands\\
%\email{\{J.F.Groote, A.J.Wijs\}@tue.nl}
%}
%\date{}

\maketitle

\begin{abstract}
\noindent
We provide a new algorithm to determine stuttering equivalence with time complexity 
$O(m \log n)$, where
$n$ is the number of states and $m$ is the number of transitions of a Kripke structure.
This algorithm can also be used to determine branching bisimulation in $O(m(\log |\mathit{Act}|+\log n))$ time
where $\mathit{Act}$ is the set of actions in a labelled transition system.

Theoretically, our algorithm substantially improves upon existing algorithms which all have time complexity 
$O(m n)$ at best \cite{BO05,BP09,GV90}. Moreover, it has better or equal space complexity.
Practical results confirm these findings showing that our algorithm can outperform existing
algorithms with orders of magnitude, especially when the sizes of the Kripke structures are large. 

The importance of our algorithm stretches far beyond
stuttering equivalence and branching bisimulation. The known $O(m n)$ algorithms were 
already far more efficient (both in space and
time) than most other algorithms to determine behavioural equivalences (including weak bisimulation) and
therefore it was often used as an essential preprocessing step. This new algorithm makes this use of
stuttering equivalence and branching bisimulation even more attractive.
\end{abstract}
%-------------------------------------------------------------------------------------------------------
\section{Introduction}
Stuttering equivalence \cite{BCG88} and branching bisimulation \cite{GW96} were proposed as alternatives to 
Milner's weak bisimulation \cite{Mi80}. They are very close to weak bisimulation, 
with as essential difference that all states in the mimicking sequence
$\tau^*a\,\tau^*$ must be related to either the state before or directly after the $a$ from the first system.
This means that branching bisimulation and stuttering equivalence are slightly stronger notions than weak bisimulation. 
%OUTCOMMMENTED TO SHORTEN TACAS SUBMISSION.
%For practical use
%there is no difference in the use of weak or branching bisimulation.
% END OUTCOMMENT

In \cite{GV90} an $O(m n)$ time algorithm was proposed for stuttering equivalence and branching bisimulation,
where $m$ is the number of transitions and $n$ is the number of states in either a Kripke structure (for stuttering equivalence) or a labelled transition 
system (for branching bisimulation). We refer to this algorithm as
 GV. It is based upon the $O(m n)$ algorithm for bisimulation equivalence in 
\cite{KS90}. Both algorithms require $O(m{+}n)$ space.
They calculate for each state whether it is bisimilar to another state.

The basic idea of the algorithms of \cite{GV90,KS90} is to partition the set of states into blocks. States that are bisimilar
always reside in the same block. Whenever there are some states in a block $B'$ from which a transition is possible to some
block $B$ and there are other states in $B'$ from which such a step is not possible, $B'$ is split accordingly. Whenever
no splitting is possible anymore, the partition is called stable, and two states are in the same block iff they are bisimilar.
%%OUTCOMMENTED FOR TACAS VERSION
%An important notion for GV is the so-called bottom state, a state that has no outgoing transition
%to a state in the same block. We refer to such transitions as inert transitions. 
%If there are no cycles of states with the same label
%in the Kripke structure, which can be eliminated in linear time, the concept of bottom states makes
%it possible to find whether a block can be split
%without having to carry out a costly transitive closure operation for the inert transitions (which causes
%the calculation of weak bisimulation to be less efficient).
%%END OUTCOMMENT

There have been some attempts to come up with improvements of GV.
The authors of \cite{BO05} observed that GV only splits a block in two parts
at a time. They proposed to split a block in as many parts as possible, reducing moving states and transitions
to new blocks. 
%OUTCOMMMENTED TO SHORTEN TACAS SUBMISSION.
%They do this by computing so-called signatures of states capturing which blocks can be reached via outgoing
%transitions, and partitioning the states using a hash table based on those signatures.
% END
Their worst case time and space complexities are worse than that of GV, especially the space
complexity $O(m n)$, but in practice
this algorithm can outperform GV. In \cite{BP09}, the space complexity is brought back to $O(m{+}n)$.
A technique to be performed on Graphics Processing Units based on both GV and~\cite{BO05,BP09} is proposed in~\cite{W15}.
This improves the required runtime considerably by employing parallelism, 
but it does not imply any improvement to the single-threaded algorithm.

In \cite{PT87} an $O(m \log n)$ algorithm is proposed for strong bisimulation as an improvement upon the 
algorithm of \cite{KS90}. The core idea for this improvement is described as ``process the smaller half''~\cite{H74}. 
Whenever a block is split in two parts the amount of work must be contributed to the size of the smallest
resulting block. In such a case a state is only involved in the process of splitting if it resides in a block at most half the size of the block it was previously in when involved in splitting. This means that a state can never be involved in 
more than $\log_2 n$ splittings. As the time used in each state is proportional to the number of incoming or outgoing transitions
in that state, the total required time is $O(m \log n)$.

In this paper we propose the first algorithm for stuttering equivalence and branching bisimulation in which the
``process the smaller half''-technique is used. By doing so, we can finally confirm the conjecture in~\cite{GV90}
that such an improvement of GV is conceivable.
Moreover, we achieve an even lower complexity, namely $O(m \log n)$, than conjectured in~\cite{GV90} by applying the technique
twice, the second time for handling the presence of inert transitions. 
%OUTCOMMENTED IN THE PAST.
First we establish whether a block can be
split by combining the approach
regarding bottom states from GV with the detection approach in~\cite{PT87}.
Subsequently, we use the ``process the smaller half''-technique again to split a block by only traversing transitions in a 
time proportional to the size of the smallest subblock. As it is not known which of the two subblocks is smallest, 
the transitions of the two subblocks are processed alternatingly,
such that the total processing time can be contributed to the smallest block. For checking behavioural equivalences, applying
such a technique is entirely new. We are only aware of a similar approach for an algorithm in which the smallest bottom strongly
connected component of a graph needs to be found~\cite{CH11}.
% END

The algorithm that we propose is complex. Although the basic sketch of the algorithm is relatively straightforward, it
heavily relies on auxiliary data structures. For instance, for each transition it is recalled how many other transitions
there are from the block where the transition starts to the constellation in which the transition ends. Maintaining such auxiliary
data structures is a tedious process. Therefore, we do not only prove the major steps of the algorithm correct,
we also provide a very detailed description of the algorithm, and we ran the implemented algorithm on many thousands of 
randomly generated 
test cases, comparing the reductions with the outcomes of existing algorithms. This not only convinced us that the algorithm 
is correct and correctly implemented, it also allows others to easily reimplement the algorithm.

From a theoretical viewpoint our algorithm outperforms its predecessors substantially. But a fair question is whether 
this also translates into practice. The theoretical bounds are complexity upperbounds, and depending on the transition system, 
the classical algorithms can be much faster than the upperbound suggests. Furthermore, the increased bookkeeping in the
new algorithm may be such a burden that all gains are lost. For this reason we compared the practical performance of
our algorithm with that of the predecessors, and we found that for practical examples
our algorithm can always match the best running times, but especially when the Kripke
structures and transition systems get large, our algorithm tends to outperform existing algorithms with orders of magnitude.

Compared to checking other equivalences the existing algorithms for branching bisimulation/stuttering equivalence were already 
known to be 
practically very efficient. This is the reason that they are being used in multiple explicit-state model checkers, such as
\textsc{Cadp}~\cite{CADP}, the \textsc{mCRL2} toolset~\cite{GM14} and \textsc{TVT}~\cite{Tamperetool}. In particular they
are being used as preprocessing steps for other equivalences (weak bisimulation, trace based equivalences) that are much
harder to compute. For weak bisimulation recently 
a $O(mn)$ algorithm has been devised \cite{L09,RT08}, but until that time an expensive transitive
closure operation of at best $O(n^{2.373})$ was required. Using our algorithm as a preprocessing step,
the computation time of all other behavioural `weak' equivalences can be made faster.

\section{Preliminaries}
We introduce Kripke structures and (divergence-blind) stuttering equivalence. Labelled transitions systems and 
branching bisimulation will
be addressed in section \ref{sec:bbisim}.
\begin{definition}{Kripke structure}
A \textit{Kripke structure} is a four tuple $K=( S,\mathit{AP},\pijl{}{}, L)$, where 
\begin{compactenum}
\item
$S$ is a finite set of states.
\item
$\mathit{AP}$ is a finite set of atomic propositions.
\item
$\pijl{}{}\subseteq S\times S$ is a total transition relation, i.e., for
each $s\in S$ there is an $s'\in S$ s.t.\ $s\pijl{}{}s'$.
\item 
$L:S\rightarrow 2^{\mathit{AP}}$ is a state labelling. 
\end{compactenum}
\end{definition}
We use $n{=}|S|$ for the number of states and $m{=}|{\pijl{}{}}|$ for the number of transitions.
For a set of states $B{\subseteq} S$, we write $s\pijl{}{B}s'$ for $s\pijl{}{}s'$ and $s'\in B$, and
$s\pijl{}{}B$ iff there is some $s'\in B$ such that $s\pijl{}{}s'$. We write 
$s\npijl{}s'$ and $s\npijl{}B$ iff it is not the case that $s\pijl{}{}s'$, resp., $s\pijl{}{}B$.

\begin{definition}{Divergence-blind stuttering equivalence}
Let $K=(S,\mathit{AP},\pijl{}{},L)$ be a Kripke structure.
A symmetric relation $R\subseteq S\times S$ is a \textit{divergence-blind stuttering
equivalence} iff for all $s,t\in S$ such that $sRt$:
\begin{compactenum}
\item
$L(s)=L(t)$.
\item
for all $s'\in S$ if $s\pijl{}{}s'$, then there are $t_0,\ldots, t_k\in S$ for some $k\in\Nat$
such that $t=t_0$, $sRt_i$, $t_i\pijl{}{}t_{i+1}$, and $s'Rt_k$ for all $i<k$.
\end{compactenum}
We say that two states $s,t\in S$ are \textit{divergence-blind stuttering equivalent},
notation $s{\bis_{\mathit{dbs}}} t$, iff there is a divergence-blind stuttering
equivalence relation $R$ such that $sRt$.
\end{definition}
An important property of divergence-blind stuttering equivalence is that if states on a loop
all have the same label then all these states are divergence-blind stuttering equivalent. 
%Another important property of bisimulations is that if two states are bisimilar, then they
%can be joined,
%merging the incoming and outgoing transitions,
%obtaining a minimal transition system.
We define stuttering equivalence in terms of divergence-blind stuttering equivalence 
using the following Kripke structure.
\begin{definition}{Stuttering equivalence}
Let $K=(S,\mathit{AP},\pijl{}{},L)$ be a Kripke structure. Define the
Kripke structure $K_d=(S\cup\{s_d\},\mathit{AP}\cup\{d\},\pijl{}{d},L_d)$ where $d$ is an atomic
proposition not occurring in $\mathit{AP}$ and $s_d$ is a fresh state not occurring in $S$.
Furthermore,
\begin{compactenum}
\item
$\pijl{}{d}=\pijl{}{}\cup\{\langle s,s_d\rangle~|~s$ is on a cycle of states all labelled with $L(s)$, or $s=s_d\}$.
\item
For all $s\in S$ we define $L_d(s)=L(s)$ and $L_d(s_d)=\{d\}$.
\end{compactenum}
States $s,t\in S$ are \textit{stuttering equivalent}, 
notation $s{\bis_s} t$ iff there is a divergence-blind
stuttering equivalence relation $R$ on $S_d$ such that $sRt$.
\end{definition}
Note that an algorithm for divergence-blind stuttering equivalence
can also be used to determine stuttering equivalence by employing only a linear
time and space transformation. Therefore, we only concentrate on an algorithm
for divergence-blind stuttering equivalence.
\section{Partitions and splitters: a simple algorithm}
Our algorithms perform partition refinement of an initial partition containing the set of states $S$. 
A \textit{partition} $\pi=\{ B_i\subseteq S~|~1\leq i\leq k\}$ is a set of non empty 
subsets such that $B_i\cap B_j=\emptyset$ for all $1\leq i<j\leq k$ and $S =\bigcup_{1\leq i\leq k}B_i$. 
Each $B_i$ is called a \textit{block}.
% of the partition.

We call a transition $s \pijl{}{} s'$ \textit{inert w.r.t.\ $\pi$} iff $s$ and $s'$ are in the same block $B \in \pi$.
We say that a partition $\pi$ \textit{coincides} with divergence-blind stuttering equivalence 
when $s{\bis_{\mathit{dbs}}} t$ iff there is a block $B\in \pi$ such that $s,t\in B$. We say that 
a partition \textit{respects} divergence-blind stuttering equivalence iff for all $s,t\in S$ 
if $s{\bis_{\mathit{dbs}}} t$ then there is some block $B\in \pi$ such that $s,t\in B$. The goal
of the algorithm is to calculate a partition that coincides with divergence-blind stuttering
equivalence. This is done starting with the initial partition $\pi_0$ consisting of blocks
$B$ satisfying that if $s,t\in B$ then $L(s)=L(t)$. Note that this initial partition respects 
divergence-blind stuttering equivalence. 

We say that a partition $\pi$ is \textit{cycle-free} iff there is 
no state $s\in B$ such that $s\pijl{}{B}s_1$ $\pijl{}{B}\cdots\pijl{}{B}s_k\pijl{}{}s$ for some $k\in\Nat$
for each block $B\in \pi$. 
It is easy to make the initial partition $\pi_0$ cycle-free by merging all states on a cycle in each block into a single state.
This preserves divergence-blind stuttering equivalence and can be performed in linear time employing a standard algorithm
to find strongly connected components~\cite{H74}. 

The initial partition is refined until it 
coincides with divergence-blind stuttering equivalence.
Given a block $B'$ of the current partition and the union $\setB$ of some of the blocks in the partition, we define
%\begin{shrinkeq}{-1ex}
\[\begin{array}{l}
\splitpi(B',\setB)=\{ s_0{\in} B' \mid \exists k {\in} \Nat, s_1,.., s_k {\in} S. s_i\pijl{}{}s_{i+1}, s_i{\in} B'\textrm{ for all }i<k \wedge s_k  {\in} \setB \}
\\
\cosplitpi(B',\setB)=B'\setminus\splitpi(B',\setB).
\end{array}\]
%\end{shrinkeq}
Note that if $B'\subseteq \setB$, then $\splitpi(B',\setB)=B'$.
The sets $\splitpi(B',\setB)$ and $\cosplitpi(B',\setB)$ are intended as the
new blocks to replace $B'$.
It is common to split blocks under single blocks, i.e., $\setB$ corresponding with a single block $B\in \pi$~\cite{GV90,KS90}.
However, as indicated in \cite{PT87}, it is required to split under the union of some of the blocks in $\pi$ to 
obtain an $O(m \log n)$
algorithm. We refer to such unions as \textit{constellations}. In section~\ref{sec:constellations}, we use 
constellations consisting of more than one block in the splitting.

We say that a block $B'$ is \textit{unstable} under $\setB$ iff
$\splitpi(B',\setB)\not=\emptyset$ and $\cosplitpi(B',\setB)$ $\not=\emptyset$. 
A partition $\pi$ is \textit{unstable} under $\setB$ iff there is at least one $B' \in \pi$
which is unstable under $\setB$.
If $\pi$ is 
not unstable under $\setB$ then it is called \textit{stable under} $\setB$.
If $\pi$ is stable under all $\setB$, then it is simply called stable.

A \textit{refinement} of $B' \in \pi$ under $\setB$
consists of two new blocks $\splitpi(B',\setB)$ and $\cosplitpi(B',\setB)$.
A partition $\pi'$ is a refinement of $\pi$ under $\setB$ iff all unstable blocks $B' \in \pi$
have been replaced by new blocks $\splitpi(B',\setB)$ and $\cosplitpi(B',\setB)$.

The following lemma expresses that if a partition is stable then it coincides with divergence-blind stuttering equivalence.
It also says that during refinement, the encountered partitions respect divergence-blind stuttering equivalence 
and remain cycle-free.
\begin{lemma}
Let $K=(S,\mathit{AP},\pijl{}{}, L)$ be a Kripke structure and $\pi$ a partition of $S$.
\begin{enumerate}
\item
For all states $s,t\in S$, if $s,t\in B$ with $B$ a block of the partition $\pi$, $\pi$ is stable, and
a refinement of the initial partition $\pi_0$, then $s{\bis_\mathit{dbs}} t$.
\item
If $\pi$ respects divergence-blind stuttering equivalence 
then any refinement of $\pi$ under the union of some of the blocks in $\pi$ also respects it.
\item
If $\pi$ is a cycle-free partition, then any refinement of $\pi$ is also cycle-free.
\end{enumerate}
\end{lemma}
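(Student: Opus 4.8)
The three statements are of different characters, so I would prove them in the order 3, 2, 1, since part 3 is the most elementary and part 1 is the one that requires the real work. For part 3, I would argue by contraposition: suppose a refinement $\pi'$ of $\pi$ under some $\setB$ is not cycle-free, i.e. some block $C \in \pi'$ contains a cycle $s \pijl{}{C} s_1 \pijl{}{C} \cdots \pijl{}{C} s_k \pijl{}{} s$. Every block of $\pi'$ is contained in a block of $\pi$ (refinement only splits), so all of $s, s_1, \ldots, s_k$ lie in a single block $B \in \pi$, whence that same cycle witnesses that $B$ is not cycle-free, contradicting cycle-freeness of $\pi$. (One subtlety: a block of $\pi'$ that equals $\splitpi(B',\setB)$ or $\cosplitpi(B',\setB)$ is still a subset of $B'$, so this is immediate; nothing about the internal structure of $\splitpi$ is needed.)

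For part 2, I would take $s, t \in S$ with $s \bis_{\mathit{dbs}} t$ and show they land in the same block of the refinement $\pi'$. Since $\pi$ respects $\bis_{\mathit{dbs}}$, there is a block $B' \in \pi$ with $s, t \in B'$. If $B'$ is stable under $\setB$ it survives intact and we are done, so assume $B'$ is split into $\splitpi(B',\setB)$ and $\cosplitpi(B',\setB)$; it suffices to show $s \in \splitpi(B',\setB) \iff t \in \splitpi(B',\setB)$. This is where I would invoke the definition of $\bis_{\mathit{dbs}}$: if $s \in \splitpi(B',\setB)$, there is a path $s = s_0 \pijl{}{} s_1 \pijl{}{} \cdots \pijl{}{} s_j$ with $s_0,\ldots,s_{j-1} \in B'$ and $s_j \in \setB$. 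Using that $\bis_{\mathit{dbs}}$ is itself a divergence-blind stuttering equivalence, I would trace this path from $t$: condition 2 of the definition lets me mimic each step, producing a path from $t$ all of whose intermediate states are $\bis_{\mathit{dbs}}$-related to the $s_i$'s and whose final state is $\bis_{\mathit{dbs}}$-related to $s_j$. Because $\pi$ respects $\bis_{\mathit{dbs}}$, "related to $s_i \in B'$" forces the intermediate states into $B'$, and "related to $s_j \in \setB$" — here I need that $\setB$ is a union of blocks of $\pi$, which is exactly the hypothesis — forces the final state into $\setB$. Hence $t \in \splitpi(B',\setB)$; symmetry gives the converse. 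The one point needing care is the induction bookkeeping when a mimicking step is "stuttering" (the path from $t$ may be longer than the path from $s$), but the intermediate states still stay $\bis_{\mathit{dbs}}$-related to $s_i$, hence in $B'$, so the argument goes through; I would phrase it as an induction on $j$.

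For part 1 — the main obstacle — I need: $\pi$ stable and a refinement of $\pi_0$ implies each block is contained in a single $\bis_{\mathit{dbs}}$-class. The natural strategy is to exhibit a concrete divergence-blind stuttering equivalence: define $R = \{(s,t) \mid s,t \text{ lie in the same block of } \pi\}$ and verify the two conditions of the definition, which would give $s \bis_{\mathit{dbs}} t$ for block-mates. $R$ is symmetric by construction. Condition 1 ($L(s) = L(t)$) follows because $\pi$ refines $\pi_0$ and $\pi_0$ is label-homogeneous. Condition 2 is the heart of it: given $s \pijl{}{} s'$ with $s R t$, I must produce a path $t = t_0 \pijl{}{} \cdots \pijl{}{} t_k$ with $s R t_i$ for $i < k$ and $s' R t_k$. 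Let $B$ be the block of $s$ and $t$, and let $C$ be the block of $s'$. If $C = B$ (the transition is inert), the trivial path $k=0$, $t_0 = t$ works since then $s' R t$. If $C \ne B$, then $s \in \splitpi(B, C)$ (indeed $s \pijl{}{} s' \in C$), so $\splitpi(B,C) \ne \emptyset$; stability of $\pi$ under the constellation containing exactly block $C$ forces $\cosplitpi(B,C) = \emptyset$, i.e. $\splitpi(B,C) = B$, so $t \in \splitpi(B,C)$ too — meaning there is a path from $t$ staying in $B$ and reaching $C$. The remaining gap is that reaching $C$ is not the same as reaching $s'$: I land in some $t_k \in C$, giving $s' R t_k$ since $s', t_k \in C$ — which is exactly what condition 2 asks, and all earlier states of that path are in $B$ hence $R$-related to $s$. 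So the argument actually closes. The genuinely delicate piece is justifying that $\pi$ being "stable" (stable under all unions of blocks) gives stability under the single block $C$, and handling the degenerate case where the path from $t$ into $C$ has length zero (i.e. $t \in C$), which cannot happen here since $t \in B \ne C$; I would also double-check that cycle-freeness of $\pi$ (from part 3, inherited along the refinement chain from $\pi_0$) is what guarantees the $\splitpi$ paths are finite and the whole construction is well-founded, tying the three parts together.
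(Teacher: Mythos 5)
Your proposal is correct and follows essentially the same route as the paper: part 1 exhibits the same-block relation $R$ and verifies it is a divergence-blind stuttering equivalence using stability, part 2 transports a $\splitpi$-witnessing path from $s$ to $t$ via the mimicking condition (the paper phrases this as a contradiction, you phrase it directly, which is the same argument), and part 3 is the same one-line observation that blocks only shrink. The only cosmetic differences are your separate treatment of the inert case in part 1 (the paper absorbs it via $\splitpi(B,B)=B$) and your unnecessary worry about well-foundedness, since $\splitpi$ is defined by existence of a finite path.
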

%\begin{proof}
%See Appendix A.
%\vspace{-3ex}
%\end{proof}
\begin{proof}
\begin{enumerate}
\item
We show that if $\pi$ is a stable partition, the relation $R=\{\langle s,t\rangle~|~s,t\in B,~B\in\pi\}$ is
a divergence-blind stuttering equivalence. It is clear that $R$ is symmetric. Assume $sRt$. Obviously, $L(s)=L(t)$ 
because $s,t\in B$ and $B$ refines the initial partition. For the second requirement of divergence-blind
stuttering equivalence, suppose $s\pijl{}{} s'$. There is a block $B'$ such
that $s'\in B'$. As $\pi$ is stable, it holds for $t$ that $t=t_0\pijl{}{}t_1\pijl{}{}\cdots \pijl{}{} t_k$ for
some $k\in\Nat$, $t_0,\ldots ,t_{k-1}\in B$ and $t_k\in B'$. This clearly shows that for all $i<k$ $sRt_i$, and $s'Rt_k$. 
So, $R$ is a divergence-blind stuttering equivalence, and therefore 
it holds for all states $s,t\in S$ that reside in the
same block of $\pi$ that $s{\bis_\mathit{dbs}}t$.
\item
The second part can be proven by reasoning towards a contradiction. 
Let us assume that a partition 
$\pi'$ that is a refinement of $\pi$ under $\setB$ does not
respect divergence-blind 
stuttering equivalence, although $\pi$ does.
Hence, there are states $s, t\in S$ with $s {\bis_{\mathit{dbs}}} t$ and
a block $B' \in \pi$ with $s, t\in B'$ and $s$ and $t$ are in different blocks in $\pi'$. Given that $\pi'$ is a 
refinement of $\pi$ under $\setB$, $s \in \splitpi(B',\setB)$ and $t \in \cosplitpi(B', \setB)$ (or vice versa,
which can be proven similarly). By definition of $\splitpi$, there are 
$s_1, \ldots, s_{k-1} \in B'$ ($k \in \Nat$) and $s_k \in \setB$ such that $s \pijl{}{} s_1 \pijl{}{} \cdots \pijl{}{} 
s_k$. Then, either $k=0$ and $B'\subseteq \setB$, but then $t\notin\cosplitpi(B',\setB)$. Or $k>0$, and 
since $s {\bis_{\mathit{dbs}}} t$, there are 
$t_1, \ldots, t_{l-1} \in B'$ ($l \in \Nat$) and $t_l \in \setB$ such that 
$t \pijl{}{} t_1 \pijl{}{} \cdots \pijl{}{} t_l 
$ with $s_i R t_j$ for all $1 \leq i < k$, $1 \leq j < l$ and $s_k R t_l$. This means that we have 
$t \in \splitpi(B', \setB)$, again contradicting that $t \in \cosplitpi(B', \setB)$.
\item
If $\pi$ is cycle-free, this property is straightforward, since splitting any block of $\pi$ will not introduce cycles.
%\qed
\end{enumerate}
%\vspace{-3ex}
\end{proof}

This suggests the following simple algorithm which has time complexity $O(m n)$ and space complexity $O(m{+}n)$,
which essentially was presented in \cite{GV90}.
\begin{center}
%\scalebox{0.8}{
\begin{tabular}{|l|}
\hline
\hspace{1cm}$\pi:=\pi_0$, i.e., the initial partition;\\
\hspace{1cm}\textbf{while} $\pi$ is unstable under some $B\in \pi$ \hspace*{1cm}\\
\hspace{1.5cm}$\pi:=$ refinement of $\pi$ under $B$;\\
\hline
\end{tabular}
%}
\end{center}
\noindent
It is an invariant of this algorithm that $\pi$ respects 
divergence-blind stuttering equivalence and $\pi$ is cycle-free. In particular, $\pi=\pi_0$ satisfies this invariant initially.
If $\pi$ is not stable, a refinement under some block $B$ exists, splitting at least one block.
Therefore, this algorithm finishes in at most $n{-}1$ steps as during each iteration of the algorithm
the number of blocks increases by one, and the number of blocks can never exceed the number of states. 
When the algorithm terminates, $\pi$ is stable and therefore,
two states are divergence-blind stuttering equivalent iff they are part of the same block in the
final partition. This end result is independent of the order in which splitting took place.

In order to see that the time complexity of this algorithm is $O(m n)$, we must show that we can detect that $\pi$
is unstable and carry out splitting in time $O(m)$.
The crucial observation to efficiently determine whether a partition is stable stems from \cite{GV90} where
it was shown that it is enough to look at the bottom states of a block, which always exist for each block
because the partition is cycle-free.
The \textit{bottom states} of a block
are those states that do not have an outgoing inert transition, i.e., a transition 
to a state in the same block. They are defined by
\[\mathit{bottom}(B)=\{s\in B~|~\textrm{there is no state }s'\in B\textrm{ such that }s\pijl{}{}s'\}.\]

The following lemma presents the crucial observation concerning bottom states.
%, except that it is formulated more generally for sets of states, but we require that later.
\begin{lemma}
\label{branchingsplitlemma}
Let $K=(S,\mathit{AP},\pijl{}{},L)$ be a Kripke structure and 
$\pi$ be a cycle-free partition of its states.
Partition $\pi$ is unstable under union $\setB$ of some of the blocks in $\pi$ iff there is a block $B'\in\pi$ such that
\[\emptyset\subset \splitpi(B',\setB) \textrm{ and }
\mathit{bottom}(B')\cap\splitpi(B',\setB)\subset \mathit{bottom}(B').\]
Here $\subset$ is meant to be a strict subset.
\end{lemma}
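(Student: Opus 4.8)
The statement is a biconditional about instability under a union $\setB$, reformulated in terms of bottom states. The plan is to prove each direction separately, relying heavily on cycle-freeness to guarantee that every state can reach a bottom state of its own block via inert transitions.

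\textbf{($\Leftarrow$) Easy direction.} Suppose there is a block $B'$ with $\emptyset \subset \splitpi(B',\setB)$ and $\mathit{bottom}(B') \cap \splitpi(B',\setB) \subset \mathit{bottom}(B')$. The first condition gives $\splitpi(B',\setB) \neq \emptyset$ immediately. For instability I also need $\cosplitpi(B',\setB) = B' \setminus \splitpi(B',\setB) \neq \emptyset$. The second condition provides a bottom state $b \in \mathit{bottom}(B')$ with $b \notin \splitpi(B',\setB)$, so $b \in \cosplitpi(B',\setB)$, and hence $\cosplitpi(B',\setB) \neq \emptyset$. So $B'$ is unstable under $\setB$, and therefore $\pi$ is.

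\textbf{($\Rightarrow$) Harder direction.} Suppose $\pi$ is unstable under $\setB$, so there is a block $B'$ with both $\splitpi(B',\setB) \neq \emptyset$ and $\cosplitpi(B',\setB) \neq \emptyset$. The first gives $\emptyset \subset \splitpi(B',\setB)$ directly. It remains to show $\mathit{bottom}(B') \cap \splitpi(B',\setB) \subset \mathit{bottom}(B')$, i.e., that some bottom state of $B'$ lies outside $\splitpi(B',\setB)$. The key observation is the ``upward closure'' of $\splitpi$ along inert transitions in the reverse direction: if $s \pijl{}{} s'$ with $s, s' \in B'$ and $s' \in \splitpi(B',\setB)$, then $s \in \splitpi(B',\setB)$ — this is immediate from the definition of $\splitpi$ by prepending the transition $s \pijl{}{} s'$ to a witnessing path. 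Contrapositively, if $s \in \cosplitpi(B',\setB)$ and $s \pijl{}{} s'$ with $s' \in B'$, then $s' \in \cosplitpi(B',\setB)$. Now take any $t \in \cosplitpi(B',\setB)$, which is nonempty. Since $\pi$ is cycle-free, following inert transitions from $t$ inside $B'$ cannot loop forever, so there is a finite inert path $t = u_0 \pijl{}{} u_1 \pijl{}{} \cdots \pijl{}{} u_j$ with all $u_i \in B'$ and $u_j \in \mathit{bottom}(B')$. By the contrapositive observation applied along this path, $u_j \in \cosplitpi(B',\setB)$, hence $u_j \notin \splitpi(B',\setB)$. So $u_j$ is a bottom state of $B'$ not in $\splitpi(B',\setB)$, which gives $\mathit{bottom}(B') \cap \splitpi(B',\setB) \subset \mathit{bottom}(B')$, as required.

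\textbf{Main obstacle.} The only genuinely non-routine point is the $(\Rightarrow)$ direction, and within it the argument that a witnessing ``bad'' bottom state exists: one must combine (i) the upward-closure property of $\cosplitpi$ under inert transitions, which needs the careful phrasing of $\splitpi$ with its existentially quantified path, and (ii) the use of cycle-freeness (Lemma, part 3, keeps this invariant) to guarantee termination of the inert path so that a bottom state is actually reached. Everything else is unwinding definitions.
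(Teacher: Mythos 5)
Your proof is correct and follows essentially the same route as the paper's: the $(\Leftarrow)$ direction is the same unwinding of definitions, and in the $(\Rightarrow)$ direction your key closure property (a predecessor of a state in $\splitpi(B',\setB)$ within $B'$ is itself in $\splitpi(B',\setB)$, hence $\cosplitpi$ is closed under following inert transitions forward) is exactly the fact the paper uses, merely packaged there as a minimal-distance-to-bottom contradiction rather than your direct path-following to a bottom state. Both arguments invoke cycle-freeness in the same way, to guarantee that a bottom state of $B'$ is reached.
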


\begin{proof}
\begin{itemize}
\item[$\Rightarrow$]
If $\pi$ is unstable, then $\splitpi(B',\setB)\not=\emptyset$ and $\splitpi(B',\setB)\not= B'$.
The first conjunct immediately implies $\emptyset \subset \splitpi(B',\setB)$. 
If $\splitpi(B',\setB)\not= B'$, there are states $s {\notin} \splitpi(B',\setB)$. 
As the blocks $B' {\in} \pi$ do not have cycles, consider such an $s \notin \splitpi(B', \setB)$ with a smallest distance to 
a state $s_k {\in} \mathit{bottom}(B')$, i.e., $s\pijl{}{}s_1\pijl{}{}\cdots$ $\pijl{}{}s_k$ with all $s_i\in B'$.
If $s$ itself is an element of $\mathit{bottom}(B')$, the second part of the right hand side of the lemma follows.
Assume $s{\notin} \mathit{bottom}(B')$, there is some state $s'{\in} B'$ closer to $\mathit{bottom}(B')$ such 
that $s\pijl{}{}s'$. Clearly, $s'{\notin} \splitpi(B',\setB)$ either, as otherwise $s\in \splitpi(B',\setB)$.
But as $s'$ is closer to $\mathit{bottom}(B')$, the state $s$ was not a state with the smallest distance 
to a state in $\mathit{bottom}(B')$, which is a contradiction. 
\item[$\Leftarrow$]
It follows from the right hand side that
$\splitpi(B',\setB)\not=\emptyset$, $\splitpi(B',\setB)\not=B'$.
%\qed
\end{itemize}
%\vspace{-4ex}
\end{proof}
This lemma can be used as follows to find a block to be split. Consider each $B \in \pi$.
Traverse its incoming transitions and mark the states that can reach $B$ in zero or one step.
%If the source state is not in $B$ only one step applies.
If a block $B'$ has marked
states, but not all of its bottom states are marked, the condition of the lemma applies, and it needs to be split.
It is at most needed to traverse all transitions to carry this out, so its complexity is
$O(m)$.

If $B$ is equal to $B'$, no splitting is possible. We implement it by marking all states in $B$ as each state in $B$
can reach itself in zero steps. In this case condition $\mathit{bottom}(B')\cap\splitpi(B',\setB)\subset \mathit{bottom}(B')$
is not true. This is different from \cite{GV90} where a block is never considered as a splitter of itself,
but we require this in the algorithm in the next sections. 

If a block $B'$ is unstable, and all states from which a state in $B$ can be reached in one step are marked,
then a straightforward recursive procedure is required to extend the marking to all states in $\splitpi(B',B)$,
and those states need to be moved to a new block. This takes time proportional to 
the number of transitions in $B'$, i.e., $O(m)$.

\section{Constellations: an $O(m\log n)$ algorithm}
\label{sec:constellations}
The crucial idea to transform the algorithm from the previous section into an $O(m\log n)$ algorithm stems
from \cite{PT87}. By grouping the blocks in the current partition $\pi$ into constellations such that $\pi$ is stable
under the union of the blocks in such a constellation, 
we can determine whether a block exists under which $\pi$ is unstable by only looking at blocks that are at most half
the size of the constellation, i.e., $|B| \leq \frac{1}{2}\Sigma_{B'{\in \setB}} |B'|$, 
for a block $B$ in a constellation $\setB$. 
If a block $B'\in\pi$
is unstable under $B$, then we use a remarkable technique consisting of two procedures running 
alternatingly to identify the smallest
block resulting from the split. The whole operation runs
in time proportional to the
smallest block resulting from the split. We involve the blocks in $\setB \setminus B$\footnote{For convenience,
we write $\setB \setminus B$ instead of $\setB \setminus \{ B \}$.} in the splitting without explicitly
analysing the states contained therein.

Working with constellations in this way ensures for each state that whenever it is involved in splitting, i.e., if it is part of a block that is
used to split or that is being split, this block
is half the size of the previous block in which the state resided when it was involved in splitting. That ensures that each state can 
at most be $\log_2(n)$ times involved in splitting. When involving a state, we only analyse its incoming and outgoing
transitions, resulting in an algorithm with complexity $O(m\log n)$. Although we require quite a number
of auxiliary data structures, these are either proportional to the number of states or to the number of transitions.
So, the memory requirement is $O(m{+}n)$.

% split a block $B$ under a block $C$ in some constellation \setC in time linear to the size of the smallest block resulting from the split. Also,

%There are two aspects that make the new algorithm much more complex than the sketch above suggests. The first one has to do with the problem that splitting a block $B$ must be done in time proportional to the size of the smallest resulting block, which will have a size smaller than or equal to half the size of $B$. Also, when we check whether the current partition is stable 

%We call $B$ the \textit{constellation} in which $B_1$ and $B_2$ reside.
%We need to check whether the current partition is stable with respect to
%$B_1$ and $B_2$. This must be resolved in time proportional to $B_1$ (as is done in \cite{PT87}). 
%
%When investigating $B_1$, we need to know its constellation $B$.

In the following, the set of constellations also forms a partition, 
which we denote by $\cal C$. 
A constellation is the union of one
or more blocks from the current partition. If it corresponds with one block, the constellation is called trivial.
The current partition is stable with respect to each constellation in ${\cal C}$. 

If a constellation \setB contains more than one block, we select one block $B {\in} \setB$ which is at most
half the size of \setB. We check whether the current partition is stable under $B$ and $\setB\setminus B$
according to lemma \ref{branchingsplitlemma} by traversing the incoming transitions of states in $B$ and marking
the encountered states that can reach $B$ in zero or one step.
For all blocks $B'$ that are unstable according to lemma \ref{branchingsplitlemma}, we calculate $\splitpi(B',B)$ and 
$\cosplitpi(B',B)$, as indicated below.

As noted in~\cite{PT87}, $\cosplitpi(B',B)$ is stable under $\setB \setminus B$. Therefore,
only further splitting of $\splitpi(B',B)$ under $\setB \setminus B$ must be investigated. If $B'$ is stable under $B$ 
because all bottom states of $B'$ are marked, 
it can be that $B'$ is not stable under $\setB\setminus B$, which we do not address here explicitly, as it proceeds along the
same line.

There is a special data structure to recall for any $B'$ and
$\setB$ which transitions go from $B'$ to $\setB$. When investigating whether $\splitpi(B',B)$ is stable
under $B$ we adapt this list to determine the transitions from $\splitpi(B',B)$ to $\setB\setminus B$ and
we simultaneously tag the states in $B'$ that have a transition to $\setB\setminus B$. 
Therefore, we know whether there are transitions from $\splitpi(B',B)$ to $\setB\setminus B$ and
we can traverse the bottom states of $\splitpi(B',B)$ to inspect whether there is a bottom state without a transition to
$B$. Following lemma \ref{branchingsplitlemma}, 
this allows us to determine whether $\splitpi(B',B)$ must be split under $\setB\setminus B$ in a
time proportional to the size of $B$. How splitting is carried out is indicated below.

When the current partition has become stable under $B$ and $\setB\setminus B$, $B$ is
moved from constellation $\setB$ into a new trivial constellation $\setBp$, and the constellation
$\setB$ is reduced to contain the states in $\setB\setminus B$. Note that the new $\setB$ can have become trivial.

There is one aspect that complicates matters. If blocks are split, the new partition is not
automatically stable under all constellations. This is contrary to the situation in \cite{PT87} and was already observed
in \cite{GV90}. Figure~\ref{fig:unstable} indicates the situation.
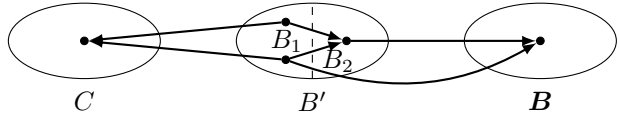
\begin{figure}[t]
\begin{center}
\scalebox{1.0}{
\begin{tikzpicture}
\draw (0,0.5) ellipse (1cm and 0.5cm);
\draw (0,-0.3)  node (D) {$C$};
\draw (3,0.5) ellipse (1cm and 0.5cm) node[below right] (B2) {$B_2$} node[left] (B1) {$B_1$};
\draw (3,-0.3) node (B) {$B'$};
\draw (6,0.5) ellipse (1cm and 0.5cm);
\draw (6,-0.3) node (C) {$\setB$};
\draw[dashed] (3,0) --  (3,1);
\draw (2.65,0.75) node[circle, minimum size=1mm, fill, draw, inner sep=0pt] (n1) {};
\draw[fill] (2.65,0.25) node[circle, minimum size=1mm, fill, draw, inner sep=0pt] (n2) {};
\draw[fill] (3.45,0.5) node[circle, minimum size=1mm, fill, draw, inner sep=0pt] (n3) {};
\draw[fill] (0,0.5) node[circle, minimum size=1mm, fill, draw, inner sep=0pt] (n4) {};
\draw[fill] (6,0.5) node[circle, minimum size=1mm, fill, draw, inner sep=0pt] (n5) {};
\draw[thick,-latex] (n1) -- (n3);
\draw[thick,-latex] (n2) -- (n3);
\draw[thick,-latex] (n3) -- (n5);
\draw[thick,-latex] (n2) to[out=-20, in=210] (n5);
\draw[thick,-latex] (n1) -- (n4);
\draw[thick,-latex] (n2) -- (n4);
\end{tikzpicture}
}
\end{center}
\caption{After splitting $B'$ under $C$, $B_1$ is not stable under $\setB$.}
\label{fig:unstable}
\end{figure}
Block $B'$ is stable under constellation $\setB$. But if $B'$ is split under block $C$
into $B_1$ and $B_2$, block
$B_1$ is not stable under $\setB$. The reason is, as exemplified by the following lemma,
that some states that were non-bottom states in $B'$ became bottom states in $B_1$.

\begin{lemma}
Let $K=(S,\mathit{AP},\pijl{}{},L)$ be a Kripke structure with cycle free partition $\pi$ with refinement $\pi'$.
If $\pi$ is stable under a constellation $\setB$, and $B'\in \pi$ is refined into $B_1',\ldots,B_k'\in\pi'$, then for
each $B_i'$ where the bottom states in $B_i'$ are also bottom states in $B'$, it holds that $B_i'$ is also
stable under $\setB$.
\end{lemma}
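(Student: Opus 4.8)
The plan is to reason directly from the definitions of \splitpi{}, \cosplitpi{}, and stability (one could equivalently invoke Lemma~\ref{branchingsplitlemma}, but arguing with \splitpi{} and \cosplitpi{} is more direct), showing that for each relevant $B_i'$ either $\splitpi(B_i',\setB)=\emptyset$ or $\cosplitpi(B_i',\setB)=\emptyset$. First I would dispose of the trivial case $B'\subseteq\setB$: then $B_i'\subseteq B'\subseteq\setB$, so by the remark following the definition of \splitpi{} we get $\splitpi(B_i',\setB)=B_i'$, hence $\cosplitpi(B_i',\setB)=\emptyset$ and $B_i'$ is stable under $\setB$. So assume $B'\cap\setB=\emptyset$. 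Since $\pi$ is stable under $\setB$, the block $B'\in\pi$ is stable under $\setB$, so $\splitpi(B',\setB)=\emptyset$ or $\splitpi(B',\setB)=B'$. If $\splitpi(B',\setB)=\emptyset$, then any path $s=s_0\pijl{}{}\cdots\pijl{}{}s_\ell$ with $s_0,\dots,s_{\ell-1}\in B_i'$ and $s_\ell\in\setB$ witnessing $s\in\splitpi(B_i',\setB)$ has, since $B_i'\subseteq B'$, all of $s_0,\dots,s_{\ell-1}$ in $B'$, so $s\in\splitpi(B',\setB)=\emptyset$, a contradiction; thus $\splitpi(B_i',\setB)=\emptyset$ and $B_i'$ is stable. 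Note this subcase does not even use the hypothesis on bottom states.

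The remaining subcase $\splitpi(B',\setB)=B'$ is the crux, and here I would first establish the intermediate claim that \emph{every bottom state $s$ of $B'$ has a transition into $\setB$}. Since $s\in B'=\splitpi(B',\setB)$, there is a path $s=s_0\pijl{}{}\cdots\pijl{}{}s_k$ with $s_0,\dots,s_{k-1}\in B'$ and $s_k\in\setB$; here $k=0$ is impossible because $B'\cap\setB=\emptyset$, and $k\geq 2$ would give $s_0\pijl{}{}s_1$ with $s_1\in B'$, contradicting that $s$ is a bottom state of $B'$; hence $k=1$, i.e.\ $s\pijl{}{}s_1$ with $s_1\in\setB$. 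Now take $B_i'$ with $\mathit{bottom}(B_i')\subseteq\mathit{bottom}(B')$ and an arbitrary $s\in B_i'$. Since $\pi'$ is a refinement of the cycle-free partition $\pi$ it is itself cycle-free, so a maximal path of inert $\pi'$-transitions starting in $s$ is finite, stays within $B_i'$, and ends in some $s_\ell\in\mathit{bottom}(B_i')$; write it as $s=s_0\pijl{}{}\cdots\pijl{}{}s_\ell$ with $s_0,\dots,s_\ell\in B_i'$. As $s_\ell\in\mathit{bottom}(B_i')\subseteq\mathit{bottom}(B')$, the claim gives $s_\ell\pijl{}{}t$ for some $t\in\setB$, so $s_0\pijl{}{}\cdots\pijl{}{}s_\ell\pijl{}{}t$ witnesses $s\in\splitpi(B_i',\setB)$. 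Hence $\splitpi(B_i',\setB)=B_i'$, so $\cosplitpi(B_i',\setB)=\emptyset$ and $B_i'$ is stable under $\setB$.

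I expect the only genuine obstacle to be spotting and proving the intermediate claim of the last subcase: that stability of $B'$ under $\setB$ together with $B'\cap\setB=\emptyset$ forces every bottom state of $B'$ to reach $\setB$ in a \emph{single} step (a longer witnessing path out of a bottom state is impossible), and then recognising that the preservation of bottom states, $\mathit{bottom}(B_i')\subseteq\mathit{bottom}(B')$, is exactly what propagates this single-step reachability up every inert path inside $B_i'$, putting all of $B_i'$ into $\splitpi(B_i',\setB)$. The rest is routine manipulation of the definition of \splitpi{} and of cycle-freeness of the refinement.
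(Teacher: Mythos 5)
Your proof is correct and rests on the same key observation as the paper's: since $B'$ is stable under $\setB$ and disjoint from it, either no state of $B'$ reaches $\setB$ or every bottom state of $B'$ has a direct transition into $\setB$, which the hypothesis $\mathit{bottom}(B_i')\subseteq\mathit{bottom}(B')$ then transfers to $B_i'$. The only difference is presentational — you argue directly by case analysis on $\splitpi(B',\setB)$ and handle the $B'\subseteq\setB$ edge case explicitly, whereas the paper runs the same idea as a short contradiction argument concluding that the offending bottom state $t$ must be a new bottom state.
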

\begin{proof}
Assume $B_i'$ is not stable under $\setB$. This means that $B_i'$ is not a subset of $\setB$.
Hence, there is a state $s\in B_i'$ such that
$s\pijl{}{} s'$ with $s' \in \setB$ and there is a bottom state $t\in B_i'$ with no outgoing transition to a state
in $\setB$. But
as $B'$ was stable under $\setB$, and $s$ has an outgoing transition
to a state in $\setB$, all bottom states in $B'$ must have at least one transition to a state in $\setB$.
Therefore, $t$ cannot be a bottom state of $B'$, and must have become a bottom state after splitting $B'$.
%\qed
\end{proof}

This means that if a block $B'$ is the result of a refinement, and some of its
states became bottom states, it must be made sure that $B'$ is stable under the
constellations. Typically, from the new bottom states a smaller number of blocks
in the constellation can be reached. For each block we maintain a list of constellations 
that can be reached from states in this block. We match the outgoing transitions of the new bottom states
with this list, and if there is a difference, we know that $B'$ must be split further. 

The complexity of checking for additional splittings to regain stability when states become bottom states
is only $O(m)$. Each state only becomes a bottom state once, and when that happens we perform calculations
proportional to the number of outgoing transitions of this state to determine whether a split must be carried out.

It remains to show that splitting can be performed in a time proportional to the size of the
smallest block resulting from the splitting. Consider splitting $B'$ under $B{\in}\setB$.
While marking $B'$ four lists of all marked and non marked, bottom and non bottom 
states have been constructed. We simultaneously mark states in $B'$ either red or blue. Red means
that there is a path from a state in $B'$ to a state in $B$. Blue means that there is no such path.
Initially, marked states are red, and non marked bottom states are blue. 

This colouring is simultaneously extended to all states in $B'$, spending equal time to both. 
The procedure is stopped when the colouring of one of the colours cannot be enlarged.
We colour states red that can reach other red states via inert
transitions using a simple recursive procedure. We colour states blue
for which it is determined that all outgoing inert transitions go to a blue state 
(for this we need to recall for each state the number
of outgoing
inert transitions) and there is no direct transition to $B$. 
The marking procedure that terminates first, provided that its number of marked states
does not exceed $\frac{1}{2}|B'|$, has the smallest block that must be split.
Now that we know the smallest
block we move its states to a newly created block. 

Splitting regarding $\setB\setminus B$ only has to be applied to $\splitpi(B', B)$, or to $B'$ if all
bottom states of $B'$ were marked.
As noted before $\cosplitpi(B', B)$ is stable under $\setB \setminus B$. Define $C :=\splitpi(B',B)$ or
$C:=B'$ depending on the situation.
We can traverse all bottom states of $C$ and check whether they have outgoing transitions
to $\setB\setminus B$. This provides us with the blue states. 
The red states are obtained as we explicitly maintained the list of all transitions from $C$ to 
$\setB\setminus B$.
By simultaneously extending this colouring the smallest subblock of either red or blue states is 
obtained and splitting can commence.

The algorithm is concisely presented below. After that, it is presented
in full detail in section~\ref{sec:detalg}. Since it is not trivial how to achieve the $O(m\log n)$ complexity, we have decided
to describe the algorithm as detailed as possible.
\begin{center}
%\scalebox{0.8}{
\begin{tabular}{|l|}
\hline
\hspace{0.5cm}$\pi:=$ initial partition; ${\cal C}:=\{\pi \};$\\
%\hspace{0.5cm}${\cal C}:=\{S \};$\\
\hspace{0.5cm}\textbf{while} $\cal C$ contains a non trivial constellation $\setB\in {\cal C}$\\
\hspace{1.0cm}\textbf{choose} some $B\in\pi$ such that $B \in\setB$ and $|B|\leq \frac{1}{2}|\setB|$;\\
\hspace{1.0cm}${\cal C}:=$partition ${\cal C}$ where $\setB$ is replaced by $B$ and $\setB\setminus B$;\\

%\hspace{1.0cm}${\cal C}:=({\cal C}\setminus \setB)\cup \{\setB\setminus\{ B\}\}\cup \{\{B\}\};$\\
\hspace{1.0cm}\textbf{if} $\pi$ is unstable for $B$ or $\setB\setminus B$\\
\hspace{1.5cm}$\pi':=$ refinement of $\pi$ under $B$ and $\setB\setminus B$;\\
%\hspace{2cm}${\cal C}:=
%\textrm{refinement of all constellations in }{\cal C}\textrm{ under }B\textrm{ and }\setB\setminus B$\\
%\hspace{2cm}${\cal C}:={\cal C}\cup \{\textrm{refinement of }C\in\pi\textrm{ under }B\textrm{ and }\setB\setminus B~
%|~C\notin \setB\textrm{ for all }\setB\in {\cal C}$\hspace*{1cm}\\
%\hspace{4cm}$\textrm{ and }C\textrm{ is unstable under }B\textrm{ and }\setB\setminus B\}$\\

\hspace{1.5cm}For each block $C\in \pi'$ with bottom states that were not bottom in $\pi$\hspace*{0.5cm}\\
\hspace{3.5cm}split $C$ until it is stable for all constellations in $\cal C$;\\
\hspace{1.5cm}$\pi:=\pi'$\\
\hline
\end{tabular}
%}
\end{center}

\section{Detailed algorithm}
\label{sec:detalg}
This section presents the data structures, the
algorithm to detect which blocks must be split, and the algorithm to split blocks. 
It follows the 
outline presented in the previous section. 
%Observe that we have to store a substantial amount of 
%auxiliary, but redundant data required, such as $s.\inertcounter$ counting the number of inert transitions
%that leave a state, or $(s\pijl{}{}s').\blockconstellationlist$ that contains all transitions from the 
%block containing $s$ to the constellation containing $s'$. All this data 
%is needed to obtain required information within the very limited amount of steps that the $O(m\log n)$ 
%algorithm allows us to use. Maintaining this data while splitting blocks is a tedious job.
\subsection{Data structures}

%We start explaining the data structures that we need for the algorithm.
As a basic data structure, we use (singly-linked) lists. For a list $L$ of elements, we assume that for each element $e$, a reference
to the position in $L$ preceding the position of $e$ is maintained, such that checking membership and removal can be done in constant time. 
In some cases we add some extra information to the elements in the list.
Moreover, for each list $L$, we maintain pointers to its first and last element, and the size $|L|$.

\begin{enumerate}
\item
The current partition $\pi$ consists of a list of blocks. Initially, it corresponds with $\pi_0$. All blocks are part of a single initial
constellation $\setC_0$.
\item
For each block $\B$, we maintain the following:
\begin{enumerate}
\item A reference $\B.\constellation$ to the constellation containing $\B$.
\item A list of the bottom states in \B called \B.\bottomstates.
\item A list of the remaining states in \B called \B.\nonbottomstates.
\item A list $B.\toconstellations$ of structures associated with constellations reachable via a transition from some $s {\in} B$. 
Initially, it contains one element associated with $\setC_0$. Each element associated with some constellation
$\setC$ in this list also contains the following:
\begin{itemize}
\item A reference $\transitionlist$ to a
list of all transitions from states in $B$ to states in $\setC \setminus B$
(note that transitions between states in $B$, i.e., inert transitions, are \emph{not} in this list).
\item When splitting the block $B$ into $B$ and $B'$ 
there is a reference in each list element to the corresponding list element in  $B'.\toconstellations$ (which in turn refers back to the element in $B.\toconstellations$).
\item In order to check for stability when splitting produces new bottom states, each element contains
a list to keep track of which new bottom states can reach the associated constellation. 
\end{itemize}
\item A reference $B.\inconstellationref$ is used to refer to the element in $B.\toconstellations$ associated with
constellation $B.\constellation$.
\end{enumerate}

Furthermore, when splitting a block $B'$ in constellation $\setB'$ under a constellation \setB and block $\B {\in} \setB$, the 
following temporary structures are used, with $\setC$ the new constellation to which $\B$ is moved:
\begin{enumerate}
\item A list $B'.\markedbottomstates$ (initially empty) contains states in $B'$ with a transition to $\B$.
\item A list $B'.\markednonbottomstates$ (initially empty) contains states that are marked, but are not bottom states, i.e., 
each of those states has at least one transition to $\B$ and at least one transition to $B'$.
\item A reference $B'.\constellationref$ is used to refer to the (new) element in $B'.\toconstellations$ associated with
constellation $\setC$, i.e., the new constellation of $B$.
\item A reference $B'.\coconstellationref$ is used to refer to the element in $B'.\toconstellations$
associated with constellation $\setB$, i.e., the old constellation of $B$.
\item A list $B'.\newbottomstates$ to keep track of the states that have become bottom states when $B'$ was split. This is required to determine whether $B'$ is stable under all constellations after a split.
%\item A reference $B'.\toconstellationelement$ to the constellation $\setB$ in $B'.\toconstellations$.
%\item A pointer $$B'$.\movepointer$ (initially \NULL), to be used when moving blocks between constellations.
\end{enumerate}
%NOT NEEDED?:, and a counter $$B'$.\constellationtransitioncounter$, which contains the number of $a$-transitions from $B'$ to $\setB\setminus \B$.
\item
The constellations are represented by two lists. The first one, \nontrivialconstellations, contains
the constellations encompassing two or more blocks of the current partition.
The second one, \trivialconstellations, contains constellations that match a block of the current partition.
Initially, if the initial partition $\pi_0$ consists of one block,
the constellation $\setC_0 = \{ \pi_0 \}$ is added to \trivialconstellations and nothing needs to be done, because the initial partition
is already stable. Otherwise $\setC_0$ is added to \nontrivialconstellations.
\item
For each constellation \setB, the following is maintained:
\begin{enumerate}
\item 
A list $\setB.\blocks$ of blocks contained in $\setB$. Initially, the only constellation is $\setC_0$.
\item Counter $\setB.\size$ is used to keep track of the number of states in $\setB$; it is equal to $\Sigma_{B{\in}\setB.\blocks} |B|$.
%\item 
%A list $\setB.\fromblocks$ containing triples $(B', \transitionsinblocktoconstellation, p)$, 
%where $B'$ is a block from which at least one transition exists leading to a state in $
%\setB$, $\transitionsinblocktoconstellation$ is the list of transitions from $B'$ to $\setB$, 
%and $p$ is a pointer to the corresponding entry of $\setB$ in $B'.\toconstellations$.
%\item For each block $B'$ in $\setB.\blockstoconstellation$, a list $B'.\statescounttoconstellation$ of \textit{(state, counter)} pairs indicating for each state $s \in B'$ how many outgoing transitions $s$ has to a state in $\setB$. In relation to these counters, each transition 
%$s \pijl{}{} s'$ and $s' \naarpijl{} s$ has two pointers, with $s' \in \setB$:
%\begin{enumerate}
%\item One pointer to an entry $(s, n)$ in $B'.\statescounttoconstellation$ of constellation 
%$\setB$, to keep track of how many outgoing transitions $s$ has to the same constellation as $s'$. We refer with $(s \pijl{}{} s')\deref \toconstellationcounter$ to the associated counter.
%\item One pointer to the list $B'.\statescounttoconstellation$ of constellation $\setB$ as a whole. We refer with $(s \pijl{}{} s') \deref \statescounttoconstellation$ to that list, and therefore with $|(s \pijl{}{} s') \deref \statescounttoconstellation|$ to the size of that list, corresponding to the number of states in $B'$ with at least one transition to $\setB$.
%\end{enumerate}

%\item Temporary pointers $\setB.\movepointer$ and $\setB.\movestatecounterpointer$ (both initially \NULL), to be used when moving states between blocks.
\end{enumerate}
%\item
%There is also a list of transitions $T_{\cal C}$ from blocks to constellations, i.e., with transitions of the form $\B\pijl{}{}\setB$.
%For each such transition, the counter $(\B \pijl{}{} \setB).\numberofstatescounttoconstellation$ contains the number of states in \B that have at least one transition to \setB. A transition $s \pijl{}{} s'$ is associated iff $s \in \B$ and $s' \in \unionsetB$. Each such transition has a reference to its associated transition 
%$\B\pijl{}{}\setB$. 

%In addition, when splitting a block $B'$ producing a new block $B'$p, we use for every transition $$B'$ \pijl{a}{} \setB$ a temporary variable $\cotransition$ containing a new transition $$B'$p \pijl{a}{} \setB$ iff there is at least one $s \in $B'$p$ with an $a$-transition to a state $s' \in \unionsetB$. Also this transition will have a counter $\numberofbottomstatescounttoconstellation$ to count the number of bottom states with at least one $a$-transition to $\setB$.
\item
Each transition $s\pijl{}{}s'$ contains its source and target state.
Moreover, it refers with $\toconstellationcount$ to a variable containing the number of
transitions from $s$ to the constellation in which $s'$ resides. For each state and constellation, there is one such variable, provided there is a transition from this state to this constellation.

Each transition $s\pijl{}{}s'$ has a reference to the element associated with $\setB$ in the list $B.\toconstellations$
where $s {\in} B$ and $s'{\in}\setB$. This is denoted as $(s\pijl{}{}s').\toconstellationref$.
Initially, it refers to the single element in $B.\toconstellations$, unless the transition is
inert, i.e., both $s{\in}B$ and $s'{\in}B$.

Furthermore, each transition $s\pijl{}{}s'$ is stored in the list of transitions from $B$ to $\setB$.
Initially, there is such a list 
for each block in the initial partition $\pi_0$. From a transition $s \pijl{}{} s'$, the list can be accessed via $(s\pijl{}{}s').\toconstellationref.\transitionlist$.

\item
For each state $s {\in} \B$ we maintain the following information:
\begin{enumerate}
\item A reference $s.\block$ to the block containing $s$.
\item
A static, i.e., not changing during the course of the algorithm, list $s.T_{\mathit{tgt}}$ of transitions of the form $s \pijl{}{} s'$
containing precisely all the transitions from $s$. 
%With each transition $s \pijl{}{} s'$ in this list, where $s\in\setBp$,
%a pointer is associated to a tuple $(\toconstellationcount, p)$, where $\toconstellationcount \in \Nat$ is the number of outgoing %transitions of $s$ that lead to $\setBp$, and $p$ is a pointer to the corresponding entry for $B$ in $\setBp.\fromblocks$. 
%Furthermore, with each transition $s \pijl{}{} s'$ in this list, two pointers are associated:
%\begin{enumerate}
%\item $(s \pijl{}{} s').\blockconstellationpointer$, pointing to the entry in $\setBp.\blockstoconstellation$ corresponding with block \B, where $s' \in \unionsetBp$.
%\item $(s \pijl{}{} s').\statescounterpointer$, pointing to the counter associated with $s$ in the list of counters associated with $(s \pijl{}{} s').\blockconstellationpointer$.
%\end{enumerate}
\item
A static list $s.T_{\mathit{src}}$ of transitions $s' \pijl{}{} s$ containing all the transitions to $s$. 
In the sequel we write such transitions as $s \naarpijl{} s'$, to stress that these move into $s$.
%As with the transitions in $s.T_{\mathit{tgt}}$, with each transition $s \naarpijl{} s'$, a pointer is associated to a tuple $
%(\toconstellationcount, p)$.
%Furthermore, with each transition $s \naarpijl{} s'$ in this list, two pointers are associated:
%\begin{enumerate}
%\item $(s \naarpijl{} s').\blockconstellationpointer$, pointing to the entry in $\setB.\blockstoconstellation$ corresponding with block $B'$, where $s' \in $B'$$, $s \in \unionsetB$.
%\item $(s \naarpijl{} s').\statescounterpointer$, pointing to the counter associated with $s'$ in the list of counters associated with $(s \naarpijl{} s).\blockconstellationpointer$.
%\end{enumerate}
\item A counter $s.\inertcounter$ containing the number of outgoing transitions to a state in the same block as $s$. 
For any bottom state $s$, we have $s.\inertcounter = 0$.
%\end{enumerate}

\item Furthermore, when splitting a block $B'$ under \setB and $\B {\in} \setB$, there are 
references 
$s.\constellationcounter$ and $s.\coconstellationcounter$ to the variables that are used to count how many transitions
there are from $s$ to $B$ and from $s$ to $\setB\setminus \B$.
%\item A pointer $s.\movepointer$ (initially \NULL), to be used when moving blocks between constellations.
\end{enumerate}

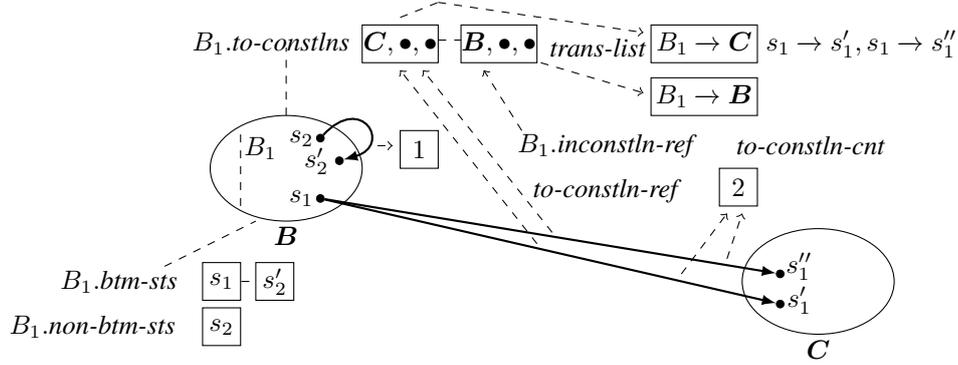
\begin{figure}[h]
\begin{center}
\scalebox{1.0}{
\begin{tikzpicture}
%\draw (0,0.5) ellipse (1cm and 0.5cm);
%\draw (0,-0.3)  node (D) {$D$};
\draw (3,0.5) ellipse (1cm and 0.7cm) node[above left] (B2) {$B_1$};
\draw (3,-0.4) node (B) {$\setB$};
\draw (10,-1.0) ellipse (1cm and 0.7cm);
\draw (10,-1.9) node (C) {$\setC$};
\draw[dashed] (2.4,0.0) --  (2.4,1.0);
%\draw (2.65,0.75) node[circle, minimum size=1mm, fill, draw, inner sep=0pt] (n1) {};
%\draw[fill] (2.65,0.25) node[circle, minimum size=1mm, fill, draw, inner sep=0pt] (n2) {};
\draw[fill] (3.45,0.1) node[circle, minimum size=1mm, fill, draw, inner sep=0pt] (n1) {};
\draw (3.2,0.1) node (s1) {$s_1$};
\draw[fill] (3.45,0.9) node[circle, minimum size=1mm, fill, draw, inner sep=0pt] (n4) {};
\draw (3.2,0.9) node (s2) {$s_2$};
\draw[fill] (3.7,0.6) node[circle, minimum size=1mm, fill, draw, inner sep=0pt] (n5) {};
\draw (3.4,0.6) node (s2) {$s_2'$};
%\draw[fill] (0,0.5) node[circle, minimum size=1mm, fill, draw, inner sep=0pt] (n4) {};
\draw[fill] (9.5,-1.3) node[circle, minimum size=1mm, fill, draw, inner sep=0pt] (n2) {};
\draw (9.75,-1.25) node (s1p) {$s_1'$};
\draw[fill] (9.5,-0.9) node[circle, minimum size=1mm, fill, draw, inner sep=0pt] (n3) {};
\draw (9.75,-0.8) node (s1pp) {$s_1''$};
%\draw[->] (n1) -- (n3);
%\draw[->] (n2) -- (n3);
\draw[thick,-latex] (n1) -- (n2);
\draw[thick,-latex] (n1) -- (n3);
%\draw[thick,->] (n4) [out=270,in=0,loop] (n4);
\draw[thick,-latex] (n4) .. controls (4,1.5) and (4.4,0.8) .. (n5);
%\draw[->] (n1) -- (n4);
%\draw[->] (n2) -- (n4);
\draw (4.0,1.9) rectangle (5.0,2.4);
\draw (4.5,2.15) node (tc1) {$\setC, \bullet, \bullet$};
\draw[dashed] (5.0,2.2) --  (5.3,2.2);
\draw (5.3,1.9) rectangle (6.3,2.4);
\draw (5.8,2.15) node (tc2) {$\setB, \bullet, \bullet$};
\draw[dashed] (3.0,1.2) --  (3.0,2.1);
\draw (2.8,2.15) node (toc) {$B_1.\toconstellations$};

\draw (8.7,0.0) rectangle (9.2,0.5);
\draw (8.95,0.25) node (sc1) {$2$};
\draw (9.9,0.8) node (sctext) {$\toconstellationcount$};
\draw[->,dashed] (8.2,-0.9) -- (8.8,-0.1);
\draw[->,dashed] (8.8,-0.7) -- (9.0,-0.1);

\draw (7.2,0.2) node (sctext) {$\toconstellationref$};

\draw (7.2,0.8) node (intext) {$B_1.\inconstellationref$};
\draw[->,dashed] (6.1,1.0) -- (5.6,1.8);

\draw[->,dashed] (6.3,-0.5) -- (4.5,1.8);
\draw[->,dashed] (6.5,-0.35) -- (4.8,1.8);

\draw (4.5,0.5) rectangle (5.0,1.0);
\draw (4.75,0.75) node (sc2) {$1$};
\draw[->,dashed] (4.2,0.8) -- (4.4,0.8);
%\draw[->,dashed] (4.2,1.1) -- (5.6,1.8);

%\draw[->,dashed] (6.9,0.1) -- (8.3,-0.95);
%\draw[->,dashed] (5.9,0.25) -- (7.7,-0.95);

%\draw (5.6,-1.3) node (sctext) {$\blockconstellationlist$};

\draw (7.8,1.9) rectangle (9.2,2.4);
\draw (8.5,2.15) node (p1) {$B_1 \pijl{}{} \setC$};
\draw (10.57,2.15) node (p1) {$s_1 \pijl{}{} s_1', s_1 \pijl{}{} s_1''$};

%\draw[->,dashed] (4.1,0.6) -- (7.7,-1.7);

\draw (7.8,1.2) rectangle (9.2,1.7);
\draw (8.5,1.45) node (p1) {$B_1 \pijl{}{} \setB$};
%\draw (9.9,1.45) node (p1) {$s_2 \pijl{}{} s_2'$};

\draw[->,dashed] (6.35,1.9) -- (7.7,1.5);
\draw (7.1,2.1) node (sctext) {$\transitionlist$};
\draw[dashed] (4.5,2.5) -- (5.0,2.7);
\draw[->,dashed] (5.0,2.7) -- (7.7,2.3);

\draw (0.8,-1.0) node (sctext) {$B_1.\bottomstates$};
\draw (0.45,-1.6) node (sctext) {$B_1.\nonbottomstates$};

\draw (1.9,-1.25) rectangle (2.4,-0.75);
\draw (2.15,-1.0) node (bs1) {$s_1$};

\draw (2.6,-1.25) rectangle (3.1,-0.75);
\draw (2.85,-1.0) node (bs1) {$s_2'$};

\draw[dashed] (2.41,-1.0) --  (2.61,-1.0);

\draw (1.9,-1.85) rectangle (2.4,-1.35);
\draw (2.15,-1.6) node (nbs1) {$s_2$};

\draw[dashed] (1.4,-0.8) --  (2.6,-0.2);

%\draw[dashed] (7.8,1.25) --  (9,1.8);
\end{tikzpicture}
}
\end{center}
\caption{An example showing some of the data structures used in the detailed algorithm.}
\label{fig:structures}
\end{figure}

Figure~\ref{fig:structures} illustrates some of the used structures. A block $B_1$ in constellation $\setB$ contains bottom states 
$s_1$, $s_2'$ and non-bottom state $s_2$.
%These are listed in $B_1.\bottomstates$ and $B_1.\nonbottomstates$, respectively. Furthermore, 
State $s_1$ has two outgoing transitions $s_1 \pijl{}{} s_1'$, $s_1 \pijl{}{} s_1''$ to states $s_1'$, $s_1''$ in constellation $\setC$. This means that for both transitions, we have the following references:
\begin{enumerate}
\item $\toconstellationcount$ to the number of outgoing transitions from $s_1$ to $\setC$.
\item $\toconstellationref$ to the element $(\setC, \bullet, \bullet)$
in $B_1.\toconstellations$, where $\setC$ is the constellation reached by the transitions, and the $\bullet$'s are the (now uninitialized)
references that are used when splitting.
\item Via element $(\setC, \bullet, \bullet)$, a reference $\transitionlist$ to the list of transitions
from $B_1$ to $\setC$.
\end{enumerate}
Note that for the inert transition $s_2 \pijl{}{} s_2'$, we only have a reference to the number of outgoing transitions from $s_2$ to $\setB$, and that $B_1.\inconstellationref$ refers to the element $(\setB,\bullet,\bullet)$
associated with the constellation containing $B_1$.
\end{enumerate}

\subsection{Finding the blocks that must be split}

While \nontrivialconstellations is not empty, we perform the algorithm listed in the following sections.
To determine whether the current partition $\pi$ is unstable, we select a constellation $\setB$ 
in \nontrivialconstellations, and we select a block $B$ from $\setB.\blocks$ such that $|B|\leq\frac{1}{2}\setB.\size$. 
We first check which blocks are unstable for $\B$ and $\setB \setminus B$.
\begin{enumerate}
\item We update the list of constellations w.r.t.\ $B$ and $\setB \setminus B$.
\begin{enumerate}
\item Move $\B$ into a new constellation \setC (with $\setC.\blocks$ empty and $\setC.\size = 0$), by removing $\B$ 
from $\setB.\blocks$, adding it to $\setC.\blocks$.
\item 
Add \setC to \trivialconstellations. If $|\setB.\blocks| {=}1$, move $\setB$ to $\trivialconstellations$.
%\item
%For each $s \in \B$, $(s \naarpijl{} s') \in s.T_{\mathit{src}}$ ($s' \in B'$), 
%do the following:
%\begin{enumerate}
%\item 
%Consider element $l$ in $B'.\toconstellations$ referred to by $(s \naarpijl{} s').\toconstellationref$.
%If the reference in $l$ to a new element is undefined, create it, call it $l'$, set its constellation to $\setC$, its
%count to $0$ and put it in $B'.\toconstellations$. Otherwise, call the new element $l'$. Decrease the count of $l$ and if it becomes $0$ remove $l$. Set $(s \naarpijl{} s').\toconstellationref$
%to $l'$ and increase its count.
%\end{enumerate}
%At the end, traverse for all states $s {\in} \B$ the incoming transitions 
%$(s \naarpijl{} s') \in s.T_{\mathit{src}}$ again, to reset the temporary
%references.
%% in the lists $B'.\toconstellations$ to undefined.
%%and $|\setB| = 2$, we remove $\setB$ from the list of constellations, and reset $$B'$.\constellation$ for all $$B'$ \in \setB$. Otherwise, if \splittableblocks is not empty and $|\setB| > 2$, only reset $\B.\constellation$.
\end{enumerate}

\item
Walk through the elements in
\B.\bottomstates and \B.\nonbottomstates, i.e., the states in $B$.
For each state $s {\in} \B$ visit all transitions $s \naarpijl{} s' \in s.T_{\mathit{src}}$, 
and do the steps (a) to (d) below for each transition where $B \neq B'$, with $B'$ the block in which $s'$ resides.
%Also perform step (a).i if (a) holds, followed by step (c) with $s'$ taken equal to $s$, and
%$B'$ taken equal to $B$. This is to handle the special case that states in $B$ can be reached by states in $B$, in particular themselves, via zero transitions.
\begin{enumerate}
\item If $B'.\markedbottomstates$ and $B'.\markednonbottomstates$ are empty:

\begin{enumerate}
\item Put $B'$ in a list \splittableblocks.
\item Let $B'.\coconstellationref$ refer to $(s \naarpijl{} s') .\toconstellationref$.
Let $B'.\constellationref$ refer to a new element in $B'.\toconstellations$.
\end{enumerate}
\item 
If $s'.\coconstellationcounter$ is uninitialized, 
let $s'.\constellationcounter$ refer to a new counter (with initial value $0$), and let $s'.\coconstellationcounter$ refer to $(s \naarpijl{} s').\toconstellationcount$.
%copy the value of 
%$(s \naarpijl{} s').\toconstellationcount$ to a new variable and let $s'.\coconstellationcounter$ refer to it.
%Set $(s \naarpijl{} s').\toconstellationcount$ to zero and let $s'.\constellationcounter$ refer to this variable.
\item 
If $s' \not\in B'.\markedbottomstates$ and $s' \not\in B'.\markednonbottomstates$, then:
\begin{enumerate}
\item If $s'$ is a bottom state, 
move $s'$ from $B'.\bottomstates$ to $B'$.\markedbottomstates.
\item Else move $s'$ from $B'$.\nonbottomstates to $B'$.\markednonbottomstates.
\end{enumerate}
\item Increment the variable to which $s'.\constellationcounter$ refers, and
 let $(s\naarpijl{} s').\toconstellationcount$ refer to this variable. 
Decrement the variable referred to by 
$s'.\coconstellationcounter$.
Move $s \naarpijl{} s'$ from $B'.$$\coconstellationref.\transitionlist$
to $B'.\constellationref.\transitionlist$, and let $(s\naarpijl{}{}s').\toconstellationref$ refer to $B'.\constellationref$.
%\item
%If $B'.\toconstellationelement$ is not yet initialised, initialise it to $(s\naarpijl{}s').\toconstellationref$.
\end{enumerate}
\item Next, check whether $B$ itself can be split. First, mark all states by moving the states in $B.\bottomstates$ to $B.\markedbottomstates$
and those in $B.\nonbottomstates$ to $B.\markednonbottomstates$. Add $B$ to $\splittableblocks$ and reset references $B.\constellationref$
and $B.\coconstellationref$. Next, for each $s{\in}B$ visit all transitions $s \pijl{}{} s' \in s.T_{\mathit{tgt}}$, and do the steps (a) to (c) below for each transition
where either $\setBp = \setB$ or $\setBp = \setC$, with $\setBp$ the constellation in which $s'$ resides.
\begin{enumerate}
\item If $\setBp {=} \setB$ and $B.\constellationref$ is uninitialized,
let $B.\coconstellationref$ refer to $(s \pijl{}{} s').\toconstellationref$,
add a new element for $\setC$ to $B.\toconstellations$, and let $B.\constellationref$ and
$B.\inconstellationref$ refer to this element.
\item If $s.\coconstellationcounter$ is uninitialized, let $s.\constellationcounter$ refer to a new counter, and 
$s.\coconstellationcounter$ to $(s \pijl{}{} s').\toconstellationcount$.
\item If $B' = B$, increment the variable to which $s.\constellationcounter$ refers, let $(s \pijl{}{} s').\toconstellationcount$ refer
to this variable, and decrement the variable referred to by $s.\coconstellationcounter$.
\end{enumerate}
\item Do the steps below for each $B'{\in} \splittableblocks$.
%Note that in this case there is at least one transition from $B'$ to $B$.
\begin{enumerate}
\item If $|B'.\bottomstates|>0$ (there is at least one unmarked bottom state in $B'$),
the block must be split. 
We leave $B'$ in the list of $\splittableblocks$.
\item Else, if $B'.\coconstellationref.\transitionlist$ is not empty, and there 
is a state $s {\in} B'.\markedbottomstates$ with $s.\coconstellationcounter$ uninitialized or $0$, the block must be split.
\item Else, no splitting is required. 
%If $|B'.\bottomstates|=0$ replace the constellation in $B'.\toconstellationelement$ by the constellation $\setC$. 
Remove $B'$ from \splittableblocks and \textbf{remove the temporary markings} of $B'$ by doing steps i to iii below.
\begin{enumerate}
\item Move each $s{\in}B'.\markedbottomstates$ to $B'.\bottomstates$ and reset $s.\constellationcounter$ and $s.\coconstellationcounter$.
\item Move each $s{\in}B'.\markednonbottomstates$ to $B'.\nonbottomstates$. Reset $s.\constellationcounter$.\\ If $s.\coconstellationcounter = 0$,
delete the variable to which $s.\coconstellationcounter$ refers.\\ Reset $s.\coconstellationcounter$.
\item Do the following steps for $\reff = \constellationref$ and $\reff = \coconstellationref$, if $B'.\reff$ is initialized.
\begin{enumerate}
\item If $|B'.\reff.\transitionlist| = 0$, then first reset $B'.\inconstellationref$ if the element to which $B'.\reff$ refers is associated with $B'.\constellation$,
and second remove the element to which $B'.\reff$ refers from $B'.\toconstellations$ and delete it.
\item Reset $B'.\reff$.
\end{enumerate} 
%$B'.\markedbottomstates$ 
%(resp.\ $B'.\markednonbottomstates$) must be moved to $B'.\bottomstates$ (resp.\ $B'.\nonbottomstates$).
%If $B'.\constellationtransitions$ or $B'.\coconstellationtransitions$ is equal to the empty list,
%remove this empty list.

\end{enumerate}\end{enumerate}

%Otherwise, if $\B.\constellationtransitioncounter$
%is larger than zero and 
%the $\mathit{constellation\_bottom\_mark\_counter}$ is less than 
%$\mathit{number\_of\_bottom\_states}$ there are states with transitions to $\unionsetB\setminus B$
%without transitions to \B, and the block must also been split.

\item If $\splittableblocks$ is not empty, start splitting (section~\ref{sec:splitting}). 
Else, carry on with finding blocks to split, by selecting another non trivial constellation $\setB$ and block 
$B{\in} \setB$, and continuing with step 1. If there are no non trivial constellations left, 
the current partition is stable, the algorithm terminates.

\end{enumerate}

\subsection{Splitting the blocks}
\label{sec:splitting}

Splitting the splittable blocks is performed using the following steps.
We walk through the blocks $B'$ in \splittableblocks, which must be split into two or three blocks under constellation \setB and block $\B$.
For each splitting procedure, we use time proportional to the smallest of the two blocks into which $B'$ is split, where one of the two smallest
blocks can have size $0$, and we also allow ourselves to traverse the marked states, which is proportional to the time
we used to mark the states in the previous step. Create new lists $X_{B'}$, $X_{B''}$, $X_{B'''}$ to keep track of new bottom states
when splitting.

If $|B'.\bottomstates|=0$ (all bottom states are marked), 
then we have $\splitpi(\B', \B ) = \B'$, and can start with step 3 below.
\begin{enumerate}
\item
We start to split block $B'$ w.r.t.\ $B$. We must determine whether $\splitpi(B', \B)$ or
$\cosplitpi(B', \B)$ is the smallest. This is done by performing the following two
procedures in lockstep, alternatingly processing a transition.
% in one of the procedures.
The entire operation terminates when one of the procedures terminates.
If one procedure acquires more states than $\frac{1}{2}|B'|$, it is stopped,
and the other is allowed to terminate.
\begin{enumerate}
\item The first procedure attempts to collect the states in $\splitpi(B',\B)$.
\begin{enumerate}
\item Initialise an empty stack $Q$ and an empty list $L$. Let $D_1$ refer to the list consisting of $B'.\markedbottomstates$ and $B'.\markednonbottomstates$.
In the next step, we walk through the states in $D_1$.
%\item Push all the states in $B'.\markedbottomstates$ and $B'.\markednonbottomstates$ on $Q$ and add them to $L$.
\item Perform \textbf{detect1} as long as $|L| \leq \frac{1}{2}|B'|$:
\begin{enumerate}
\item While $Q$ is not empty or we have not walked through all states in $D_1$, do the following steps.
\begin{itemize}
\item If $Q$ is empty, push the next state in $D_1$ on $Q$ and add it to $L$.
\item 
Pop $s$ from $Q$. For all $s \naarpijl{}s' \in s.T_{\mathit{src}}$
if $s' \in B'$ and $s' \not\in L$, add $s'$ to $L$ and
push $s'$ on $Q$.
\end{itemize}
\end{enumerate}
\end{enumerate}
\item
The second procedure attempts to collect the states in $\cosplitpi(B',B)$. It uses a priority queue $P$, in which the priority 
of a state $s$ represents the number of outgoing inert transitions to a target state 
for which it has not yet been determined whether it is in $\cosplitpi(B',\B)$. If the priority of a state $s$ 
becomes $0$, it is obvious that $s$ must be in $\cosplitpi(B',\B)$. 
\begin{enumerate}
\item Create an empty priority queue $P$ and an empty list $L'$. Let $D_2$ refer to the list $B'.\bottomstates$.
%\item Add the remaining states in $B'.\bottomstates$ with priority $0$ to $P$.
\item Perform \textbf{detect2} as long as $|L'| \leq \frac{1}{2}|B'|$:
\begin{enumerate}
\item While $P$ has states with priority $0$ or we have not walked through all states in $D_2$, 
do the following steps.
\begin{itemize}
\item If we have not walked through all states in $D_2$, let $s$ be the next state in $D_2$. Else, get a state with priority $0$ from $P$,
and let $s$ be that state. Add $s$ to $L'$. 
%perform 
\item For all $s \naarpijl{} s' \in s.T_{\mathit{src}}$, do the following steps.
\begin{itemize}
\item If $s' {\in} B'$, $s' {\not\in} P \cup L'$, and $s' {\not\in} B'.\markednonbottomstates$ (or $s'$ does not have a transition to $\setB\setminus B$; this last condition is required when \textbf{detect2} is invoked in 5.3.4.b and 5.3.7.b.i.B, and can be checked for $s'{\in}\markednonbottomstates$ by determining whether the variable to which $s'.\coconstellationcounter$ refers,
minus $s'.\inertcounter$ if $\setBp = \setB$, is larger than $0$. Else, it can be checked by walking over the transitions $s' \pijl{}{} s'' \in s.T_{\mathit{tgt}}$),
add $s'$ with priority $s'.\inertcounter$ to $P$.
\item If $s'{\in} P$, decrement the priority of $s'$ in $P$.
\end{itemize}
\end{itemize}
\end{enumerate}
\end{enumerate}

%For this purpose the remaining states in $\mathit{bottom\_states}$ , which are states without an $a$-transition
%to \B, are put into a list for the new block. Each such state
%$s$ is counted in the variable $\mathit{cosplit\_count}$. If $\mathit{cosplit\_count}$ exceeds 
%$\frac{1}{2} \cdot \mathit{number\_of\_states}$ this process is aborted.
%Furthermore, for each such as state $s$
%the incoming transitions $s'\pijl{\tau}{}s$ are traversed. If $s'$ is marked and if $s'$ is not yet put in a priority queue 
%$\mathit{priority\_queue}$ containing the states to be processed based on the value of $\mathit{inert\_count}$. 
%
%The priority in the queue is decreased by one independently on whether $s'$ is just put in the queue or not.
%
%For each state in the priority queue with priority $0$ the same steps as above for states in $\mathit{bottom\_states}$
%are repeated. 
%
%If at some point there are no states with priority $0$, all states that will not be marked have been obtained
%and the process stops. 
\end{enumerate}
\item
The next step is to actually carry out the splitting of $B'$.
Create a new block $B''$ with empty lists, and add it to the list
of blocks. Set $B''.\constellation$ to $B'.\constellation$, and add $B''$ to the list of blocks of that constellation.

Depending on whether \textbf{detect1} or \textbf{detect2} terminated in the previous step, one of the lists
$L$ or $L'$ contains the states to be moved to $B''$. Below we refer to this list as $N$.
For each $s {\in} N$, do the following:
\begin{enumerate}
\item
Set $s.\block$ to $B''$, and move $s$ from the list in which it resides in $B'$ to the corresponding list in $B''$.
%occurs, move $s$ from $B'.\bottomstates$ to $B''.\bottomstates$, from list
%$B'.\markedbottomstates$ to $B''.\markedbottomstates$, from $B'.\nonbottomstates$ to $B''.\nonbottomstates$,
%and from $B'.\markednonbottomstates$ to $B''.\markednonbottomstates$.
\item
For each $s \pijl{}{} s' {\in} T_{\mathit{tgt}}$, do the following steps.
\begin{enumerate}
\item If $(s \pijl{}{} s').\toconstellationref$ is initialized, i.e., $s \pijl{}{} s'$ is not inert, 
consider the list element $l$ in $B'.\toconstellations$ retrievable by $(s\pijl{}{}s').\toconstellationref$.
Check whether there is a corresponding new element in $B''.\toconstellations$. If so, $l$ refers to this new
element, which we call $l'$.
If not, create it, add it to $B''.\toconstellations$ and call it also $l'$, set the constellation in this new $l'$ to that of $l$, 
set $B''.\inconstellationref$ to $l'$ in case this constellation is $B''.\constellation$,
set $B''.\constellationref$ to $l'$ in case $l$ refers to $B'.\constellationref$,
set $B''.\coconstellationref$ to $l'$ in case $l$ refers to $B'.\coconstellationref$,
and let $l$ and $l'$ refer to each other.
Move $s \pijl{}{} s'$ from $l.\transitionlist$ to $l'.\transitionlist$ and let $(s \pijl{}{} s').\toconstellationref$ refer to $l'$.
\item Else, if $s'{\in} B' \setminus N$ (an inert transition becomes non-inert):
\begin{enumerate}
\item Decrement $s.\inertcounter$.
\item If $s.\inertcounter {=} 0$, add $s$ to $X_{B''}$, move $s$ from $B''.\nonbottomstates$ or
$B''.\markednonbottomstates$ to the corresponding bottom states list in $B''$.
If $B''.\inconstellationref$ is uninitialized, create a new element for $B''.\constellation$, add it to
$B''.\toconstellations$, let $B''.\inconstellationref$ refer to that element, and if $B'.\inconstellationref$ is initialized, let $B'.\inconstellationref$
and $B''.\inconstellationref$ refer to each other. Add $s \pijl{}{} s'$ to $B''.\inconstellationref.$ $\transitionlist$ and let $(s \pijl{}{} s').\toconstellationref$
refer to $B''.\inconstellationref$.
\end{enumerate}
\end{enumerate}
\item For each $s \naarpijl{} s' \in T_{\mathit{src}}$, $s' \in B' \setminus N$ (an inert transition becomes non-inert):
\begin{enumerate}
\item Decrement $s'.\inertcounter$.
\item If $s'.\inertcounter {=} 0$, add $s'$ to $X_{B'}$, and move $s'$ from $B'.\nonbottomstates$~or
$B'.\markednonbottomstates$ to the corresponding bottom states list in $B'$.
If $B'.$ $\inconstellationref$ is uninitialized, create a new element for constellation $B'.\constellation$ and add it to
$B'.\toconstellations$, let $B'.\inconstellationref$ refer to that element, and if $B''.\inconstellationref$ is initialized, let $B'.\inconstellationref$
and $B''.$ $\inconstellationref$ refer to each other. Add $s \naarpijl{}{} s'$ to $B'.\inconstellationref.$ $\transitionlist$ and let $(s \naarpijl{}{} s').$ $\toconstellationref$
refer to $B'.\inconstellationref$.
\end{enumerate}

%either from list $B'.\constellationtransitions$ to $B''.\constellationtransitions$, or from list $B'.\coconstellationtransitions$ to $B''.\coconstellationtransitions$. 

% Let $(s\pijl{}{}s').\blockconstellationlist$ refer to the new list the transition is in.
%\item 
%For each $s \pijl{}{} s' \in T_{\mathit{tgt}}$, do the following:
%\begin{enumerate}
%\item
%If $s.\block=B'$ skip this step. 
%If $s.\block=B''$, 
%\end{enumerate}
\end{enumerate}
\item For each element $l$ in $B''.\toconstellations$ referring to an element $l'$ in $B'.\toconstellations$, do the following steps.
\begin{enumerate}
\item If $l'.\transitionlist$ is empty and $l'$ does not refer to $B'.\constellationref$ and not to $B'.$ $\coconstellationref$, reset $B'.\inconstellationref$ if $l'$ is associated with $B'.\constellation$, remove $l'$ from $B'.\toconstellations$ and delete it.
\item Else, reset the reference from $l'$ to $l$.
\item Reset the reference from $l$ to $l'$.
\end{enumerate}
\item Next, we must consider splitting $\splitpi(B', B)$ under $\setB \setminus B$, 
or if $B'$ was stable under $B$, we must split $B'$ under $\setB \setminus B$. 
Define $C=\splitpi(B', B)$. If $B'$ was not split, then $C=B'$.
$C$ is stable under $\setB \setminus B$ if $C.\coconstellationref$ is uninitialized or $C.\coconstellationref.\transitionlist$ is empty
or for all $s{\in} C.\markedbottomstates$ it holds that $s.\coconstellationcounter>0$.
%Assume that this is not the case, then there are two rather different cases.
%\begin{enumerate}
%\item Assume $\splitpi(B', B)$ is the smaller set (where $B'$ has been split). 
%Iterate over all states in $\splitpi(B', B)$ and split the set in two parts. As all states
%and transitions to and from $\splitpi(B', B)$ can be visited without increasing the complexity of the algorithm, this is a straightforward operation.
%\item Assume $\cosplitpi(B', B)$ is the smaller set or $B'$ has not been split. 
%We must 
If this is not the case, then we must determine which of the blocks
$\splitpi(C, \setB \setminus B)$ or 
$\cosplitpi(C, \setB \setminus B)$ is the smallest in a time proportional to the smallest of the two. 
This is again done by simultaneously iterating over the
transitions of states in both sets in lockstep.
The entire operation terminates when one of the two procedures terminates.
If one of the procedures acquires more than $\frac{1}{2}|C|$ states, that
procedure is stopped, and the other is allowed to terminate.
%Before starting the two procedures, we set both $\splitcounter$ and $\cosplitcounter$ to $0$,
%and empty $\splitstatesfornewblock$, stack $Q$, and priority queue $P$. Furthermore, for all states $s$ in $B'.\markedbottomstates$ and $B'.\markednonbottomstates$,
%we set $s.\nestedmarked$ to $\TRUE$.

\begin{enumerate}
\item The first procedure attempts to collect the states in $\splitpi(C, \setB \setminus B)$.
\begin{enumerate}
\item Create an empty stack $Q$ and an empty list $L$. Let $D_1$ be the list of states
$s$ occurring in some $s\pijl{}{}s'$ in the list $\splitpi(B', B).$$\coconstellationref.\transitionlist$ (in practice we walk over $D_1$ by walking
over the latter list).
\item Perform \textbf{detect1} with $B' = C$.
\end{enumerate}

\item The second procedure attempts to collect the states in $\cosplitpi(C, \setB \setminus B)$.
\begin{enumerate}
\item Create an empty priority queue $P$ and an empty list $L'$. Let $D_2$ be the list of states $s$ with
$s.\coconstellationcounter = 0$ in $C.\markedbottomstates$ (in practice we walk over the latter list and check the condition).
\item Perform \textbf{detect2} with $B' = C$.
\end{enumerate}

\end{enumerate}
Finally, we split $C$ by moving either $\splitpi(C, \setB \setminus B)$ or
$\cosplitpi(C, \setB \setminus B)$ to a new block $B'''$, depending on which of the two is the smallest. 
This can be done by using the procedure described in steps 2 and 3, where in step 2.b.ii.B, we fill a state list $X_{B'''}$ instead of $X_{B''}$, and in 
2.c.ii, we possibly add states to $X_C$ instead of $X_{B'}$ (we define $X_C = X_{B'}$ if $C = B'$, otherwise $X_C = X_{B''}$). 
We move all states in $X_C$ that are now in $B'''$ to $X_{B'''}$.

\item Remove the temporary markings of each block $C$ resulting from the splitting of $B'$ (see steps 5.2.4.c.i to iii).

% removed: for each $B'$ that is investigated
%\end{enumerate}

\item If the splitting of $B'$ resulted in new bottom states (either $X_{B'}$, $X_{B''}$, or $X_{B'''}$ is not empty), check for those states whether
further splitting is required. This is the case if from some new bottom states, not all constellations can be reached which can be reached
from the block. Perform the following for $\hat{B} = B'$, $B''$, and $B'''$, and all $s \in X_{\hat{B}}$:
\begin{enumerate}
\item For all $s \pijl{}{} s' \in T_{\mathit{tgt}}$ ($s' \in \setBp$):
\begin{enumerate}
\item If it does not exist, create an empty list $S_\setBp$ and associate it with $\setBp$ in $\hat{B}.\toconstellations$ (accessible via $(s \pijl{}{} s').\toconstellationref$), and move
$\setBp$ to the front of $\hat{B}.\toconstellations$.
\item If $s \not\in S_\setBp$, add it.
\end{enumerate}
\item Move $s$ from $X_{\hat{B}}$ to $\hat{B}.\newbottomstates$.
\end{enumerate}

\item Check if there are unstable blocks that require further splitting, and if so, split further. Repeat this until no further splitting is required. A stack $Q'$ is used to keep track of the blocks that require checking. For $\hat{B}=B'$, $B''$, and $B'''$, push $\hat{B}$ on $Q'$ if $|\hat{B}.\newbottomstates| > 0$. While $Q'$ is not empty, perform steps a to c below.
\begin{enumerate}
\item Pop block $\hat{B}$ from $Q'$.
\item Find a constellation under which $\hat{B}$ is not stable by walking through the $\setB \in \hat{B}.\toconstellations$. If $|S_\setB| < |\hat{B}.\newbottomstates|$,
%, then not all new bottom states can reach $\setB$
 then further splitting is required under $\setB$:
\begin{enumerate}
\item Find the smallest subblock of $\hat{B}$ by performing the following two procedures in lockstep.
\begin{enumerate}
\item The first procedure attempts to collect the states in $\splitpi(\hat{B}, \setB)$.
\begin{itemize}
\item Create an empty stack $Q$ and an empty list $L$. Let $D_1$ be the list of states
$s$ occurring in some $s\pijl{}{}s'$ with $s'{\in}\setB$ in the list $\transitionlist$ associated with $\setB{\in} \hat{B}.\toconstellations$ (in practice we walk over $D_1$ by walking
over $\transitionlist$).
\item Perform \textbf{detect1} with $B' = \hat{B}$.
\end{itemize}

\item The second procedure attempts to collect the states in $\cosplitpi(\hat{B}, \setB)$.
\begin{itemize}
\item Create an empty priority queue $P$ and an empty list $L'$. Let $D_2$ be the list of states $s{\in} \hat{B}.\newbottomstates \setminus S_\setB$.
\item Perform \textbf{detect2} with $B' = \hat{B}$.
\end{itemize}
\end{enumerate}
\item Continue the splitting of $\hat{B}$ by performing step 2 to produce a new block $\hat{B}'$ and a list of new bottom states $X_{\hat{B}'}$.
Move all states in $\hat{B}.\newbottomstates$ that have moved to $\hat{B}'$ to $\hat{B}'.\newbottomstates$. Update the $S_\setB$ lists by walking over the $l {\in} \hat{B}.\toconstellations$ and doing the following steps as long as an empty $S_\setB$ is not encountered. Note that the $l$ still
refer to corresponding elements $l'$ in $\hat{B}'.\toconstellations$.
\begin{enumerate}
\item For all $s {\in} S_\setB$, if $s {\in} \hat{B}'$, then insert $s$ in the $S_\setB$ associated with $l'$ (if this list does not exist yet, create it, and move $l'$ to the front of $\hat{B}'.\toconstellations$) and remove $s$
from the $S_\setB$ associated with $l$.
\item If the $S_\setB$ of $l$ is now empty, remove it, and move $l$ to the back of $\hat{B}.\toconstellations$.
\end{enumerate}

\item Perform step 3 for elements $l'$ in $\hat{B}'.\toconstellations$ referring to an element $l$ in $\hat{B}.\toconstellations$. Skip steps 4 and 5,
and continue with step 6 for $\hat{B}$, $\hat{B}'$, $X_{\hat{B}}$ and $X_{\hat{B}'}$.
\item Push $\hat{B}$ on $Q'$ if $|\hat{B}.\newbottomstates| > 0$ and $\hat{B}'$ on $Q'$ if $|\hat{B}'.\newbottomstates| > 0$.
%\begin{enumerate} 
%\item If \textbf{detect1} terminated, set list $\hat{B}'.\newbottomstates$ to $S_\setB$, else to $\hat{B}.\newbottomstates \setminus S_\setB$.
%\item If \textbf{detect1} terminated, set list $\hat{B}.\newbottomstates$ to $\hat{B}.\newbottomstates \setminus S_\setB$, else to $S_\setB$.
%\item For all $\setB \in \hat{B}.\toconstellations$, walk through the states in $S_\setB$, and if they are in $\hat{B}'.\newbottomstates$, move
%them to a set $S_\setB$ associated
%with a $\setB$ entry in $\hat{B}'.\toconstellations$ (if such an entry does not exist, create it, and let $\setB$ in $\hat {B}.\toconstellations$ refer to it).
%\end{enumerate}
\end{enumerate}
\item If no further splitting was required for $\hat{B}$, empty $\hat{B}.\newbottomstates$ and remove the remaining $S_\setB$ associated with constellations $\setB {\in} \hat{B}.\toconstellations$. 
\end{enumerate}

\item If $B'.\constellation {\in} \trivialconstellations$, move it to \nontrivialconstellations. 

\end{enumerate}

\section{Application to branching bisimulation}
\label{sec:bbisim}
We show that the algorithm can also be used to determine branching bisimulation,
using the transformation 
from \cite{NV95,RSW14}, with complexity $O(m(\log|\mathit{Act}|+\log n))$.
Branching bisimulation is typically applied to labelled transition systems (LTSs). 
%An LTS is a
%three tuple $A = (S, {\it Act}, \pijl{}{})$, with $S$ a finite set of states, ${\it Act}$ a finite set of
%actions including the internal action $\tau$, and $\pijl{}{} \subseteq S \times {\it Act} \times S$ a transition
%relation.

\begin{definition}{Labeled transition system}
\label{dn:lts}
A {\it labeled transition system} ({LTS}) is a three tuple
$A=(S,{\it Act},\pijl{}{})$ where
\begin{enumerate}
\item
$S$ is a finite set of {\it states}. The number of states is generally denoted by $n$.
\item
${\it Act}$ is a finite set of actions including the \textit{internal action} $\tau$.
\item
$\pijl{}{}\,\subseteq S\times {\it Act}\times S$ is
a {\it transition relation}. The number of transitions is generally denoted as by $m$.
\end{enumerate}
It is common to write $t\pijl{a}{}t'$ for $(t,a,t')\in{\pijl{}{}}$.
\end{definition}

There are various, but equivalent, ways to define branching bisimulation. We use the definition below.

%\begin{definition}
%\label{def:branching}
%Consider the LTS $A=(S,{\it Act},\pijl{}{})$.
%We call a symmetric relation $R \subseteq S\times S$ a {\em branching bisimulation relation} iff
%%\vspace{-0.3cm}
%\[
%\begin{aligned}
%& \forall s, t, s' {\in} S. \forall a {\in} {\it Act}. s R t \wedge s \pijl{a}{} s' {\implies} \\
%& (a {=} \tau \wedge s' R t) \vee (\exists t', t'' {\in} S. t \Rightarrow t' \pijl{a}{} t'' \wedge s R t' \wedge s' R t''),
%\end{aligned}
%\]
%%\end{shrinkeq}
%where $\Rightarrow$ is the transitive, reflexive closure of $\pijl{\tau}{}$.
%\end{definition}
%States are {\it branching bisimilar} iff there is a branching bisimulation relation $R$
%relating them. 

\begin{definition}{Branching bisimulation}
\label{def:branching}
Consider the labeled transition system $A=(S,{\it Act},\pijl{}{})$.
We call a symmetric relation $R\subseteq S\times S$
a {\em branching bisimulation relation} iff for all $s,t\in S$ such that $\R{s}{t}$,
the following conditions hold for all actions $a\in{\it Act}$:
\begin{enumerate}
\item
If $s\pijl{a}{}s'$, then
\begin{enumerate}
\item
Either $a=\tau$ and $\R{s'}{t}$, or
\item
There is a sequence
$t\pijl{\tau}{}\cdots\pijl{\tau}{} t'$
of (zero or more) $\tau$-transitions such that $\R{s}{t'}$
and $t'\pijl{a}{}t''$ with $\R{s'}{t''}$.
\end{enumerate}
\end{enumerate}
Two states $s$ and $t$ are {\it branching bisimilar}
%, denoted by $s\bbis t$,
iff there is a branching bisimulation relation $R$ such that
$\R{s}{t}$.
%Two labeled transition systems are {\it branching bisimilar}
%iff their initial states are branching bisimilar.
\end{definition}

Our new algorithm can be applied to an LTS by translating it to a Kripke structure.
\begin{definition}{LTS embedding}
Let $A=(S,{\it Act},\pijl{}{})$ be an LTS. We construct
the \textit{embedding of} $A$ to be the Kripke structure $K_A=(S_A, \mathit{AP},\pijl{}{},L)$ as follows:
\begin{compactenum}
\item
$S_A=S\cup \{\langle a,t\rangle \mid s\pijl{a}{}t$ for some $t\in S \}$.
\item
$\mathit{AP}=\mathit{Act}\cup\{\bot\}$.
\item
$\pijl{}{}$ is the least relation satisfying ($s,t\in S$, $a\in\mathit{Act}\setminus\tau$):
\[\frac{s\pijl{a}{}t}{s\pijl{}{}\langle a,t\rangle} \hspace*{1cm} \frac{}{\langle a,t\rangle\pijl{}{}t} \hspace*{1cm}
\frac{s\pijl{\tau}{}t}{s\pijl{}{}t}\]
\item
$L(s)=\{\bot\}$ for $s\in S$ and $L(\langle a,t\rangle)=\{a\}$.
\end{compactenum}
\end{definition}
The following theorem stems from \cite{NV95}.
\begin{theorem}
Let $A$ be an LTS and $K_A$ its embedding.
%Consider two states $s,t$ in $A$.
Then two states are branching
bisimilar in $A$ iff they are divergence-blind stuttering equivalent in $K_A$.
\end{theorem}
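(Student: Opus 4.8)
The plan is to prove both directions by constructing an explicit relation that witnesses the equivalence, translating back and forth between branching bisimulations on $A$ and divergence-blind stuttering equivalences on $K_A$.

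\textbf{From branching bisimulation to dbs-equivalence.} Suppose $R$ is a branching bisimulation on $A$ with $\R{s}{t}$. I would define a symmetric relation $R'$ on $S_A$ by putting $\R{u}{v}'$ whenever $u,v\in S$ and $\R{u}{v}$, and additionally relating each auxiliary state $\langle a,w\rangle$ only to itself. The claim is that $R'$ is a divergence-blind stuttering equivalence. The label condition is immediate: states in $S$ all carry $\{\bot\}$, and $\langle a,w\rangle$ carries $\{a\}$, so $R'$-related states have equal labels. For the transfer condition, take $\R{u}{v}'$ and a transition $u\pijl{}{}u'$ in $K_A$. If $u\notin S$ then $u=v$ and there is nothing to prove, so assume $u,v\in S$ and $\R{u}{v}$. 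The transition $u\pijl{}{}u'$ in $K_A$ comes from one of two LTS rules: either $u'=\langle a,w\rangle$ with $u\pijl{a}{}w$ in $A$ and $a\neq\tau$, or $u'\in S$ with $u\pijl{\tau}{}u'$. In the $\tau$-case, branching bisimulation gives either $\R{u'}{v}$ (so the empty stuttering path from $v$ works) or a path $v\pijl{\tau}{}\cdots\pijl{\tau}{}v'$ with $\R{u}{v'}$ and $v'\pijl{\tau}{}v''$ with $\R{u'}{v''}$; this $\tau$-path translates directly into a $K_A$-path through $S$-states all $R'$-related to $u$, ending at $v''$ with $\R{u'}{v''}'$. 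In the $a$-case with $a\neq\tau$, branching bisimulation gives a $\tau$-path $v\pijl{\tau}{}\cdots\pijl{\tau}{}v'$ with $\R{u}{v'}$ and $v'\pijl{a}{}v''$ with $\R{w}{v''}$; translated to $K_A$ this is a path of $S$-states ending at $v'$, then one step $v'\pijl{}{}\langle a,v''\rangle$, and we need $\langle a,w\rangle\mathrel{R'}\langle a,v''\rangle$ — but my definition only relates auxiliary states to themselves, so this fails unless $w=v''$. Hence I must enlarge $R'$: relate $\langle a,w\rangle$ and $\langle a,w'\rangle$ whenever $\R{w}{w'}$. Then I must recheck the transfer condition for auxiliary states: from $\langle a,w\rangle$ the only transition is $\langle a,w\rangle\pijl{}{}w$, and from $\langle a,w'\rangle$ the path $\langle a,w'\rangle\pijl{}{}w'$ with $\R{w}{w'}'$ discharges it. So the correct $R'$ is: $\R{u}{v}$ on $S$, plus $\langle a,w\rangle\mathrel{R'}\langle a,w'\rangle$ when $\R{w}{w'}$.

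\textbf{From dbs-equivalence to branching bisimulation.} Conversely, suppose $R'$ is a divergence-blind stuttering equivalence on $K_A$ with $\R{s}{t}'$ for $s,t\in S$. Define $R$ on $S$ as the restriction $R'\cap(S\times S)$. First I observe that $R'$-related states are either both in $S$ or both auxiliary, since their labels $\{\bot\}$ versus $\{a\}$ must agree and all the $\{a\}$ are distinct for distinct $a$ — more precisely, $\langle a,w\rangle\mathrel{R'}\langle b,w'\rangle$ forces $a=b$. Now check that $R$ is a branching bisimulation. Take $\R{s}{t}$ and $s\pijl{a}{}s'$ in $A$. If $a=\tau$, then $s\pijl{}{}s'$ in $K_A$, so by dbs there is a path $t=t_0\pijl{}{}t_1\pijl{}{}\cdots\pijl{}{}t_k$ with $\R{s}{t_i}'$ for $i<k$ and $\R{s'}{t_k}'$; since $s,s'\in S$, all $t_i\in S$, and each $K_A$-transition between $S$-states is a $\tau$-transition in $A$ (the embedding rule $s\pijl{\tau}{}t / s\pijl{}{}t$ is the only one producing an $S$-to-$S$ edge — I need to note auxiliary states are never reached from $S$ via... wait, $s\pijl{}{}\langle a,t\rangle$ does go $S$ to auxiliary, but since $t_1\in S$ this edge is the $\tau$-rule). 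If $k=0$ then $\R{s'}{t}$ and condition 1(a) holds; if $k>0$ then $t\pijl{\tau}{}\cdots\pijl{\tau}{}t_{k-1}$ with $\R{s}{t_{k-1}}$ and $t_{k-1}\pijl{\tau}{}t_k$ with $\R{s'}{t_k}$, giving 1(b). If $a\neq\tau$, then $s\pijl{}{}\langle a,s'\rangle$ in $K_A$, and dbs gives a path $t=t_0\pijl{}{}\cdots\pijl{}{}t_k$ with $\R{s}{t_i}'$ ($i<k$) and $\R{\langle a,s'\rangle}{t_k}'$; since $s\in S$ all $t_i\in S$ and the first $k-1$ edges are $\tau$-transitions in $A$, while $t_k$ must be auxiliary of the form $\langle a,s''\rangle$ with $\R{s'}{s''}'$, hence $\R{s'}{s''}$, and the last edge $t_{k-1}\pijl{}{}\langle a,s''\rangle$ corresponds to $t_{k-1}\pijl{a}{}s''$ in $A$. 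This gives condition 1(b) with $t'=t_{k-1}$, $t''=s''$.

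\textbf{Main obstacle.} The crux is the bookkeeping in matching paths in $K_A$ back to transition sequences in $A$: one must carefully track when a $K_A$-path stays inside $S$ versus passing through an auxiliary state, use the fact that auxiliary states have out-degree exactly one (so a path reaching $\langle a,w\rangle$ must continue to $w$, meaning auxiliary states can only appear as the final state of a maximal mimicking path — they cannot be intermediate $R'$-related states since $s$ is not $R'$-related to them), and correctly translate the $k=0$ boundary case. The other subtle point, flagged above, is getting the definition of $R'$ right in the forward direction so that auxiliary states are related across $R$-equivalent LTS-targets, not just to themselves; once that is fixed, the verifications are routine case analyses over the three embedding rules.
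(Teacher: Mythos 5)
The paper itself does not prove this theorem; it cites De Nicola--Vaandrager [NV95], and your construction (auxiliary states $\langle a,w\rangle$ related exactly when their targets are related, and restriction to $S\times S$ in the other direction) is precisely the standard one used there. Your backward direction is sound: the label argument correctly forces $R'$-related states to be of the same kind, the case analysis over the three embedding rules is right, and the out-degree-one observation correctly yields $s'\,R'\,s''$ from $\langle a,s'\rangle\,R'\,\langle a,s''\rangle$.

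There is, however, one genuine gap, in the forward direction, at the step ``this $\tau$-path translates directly into a $K_A$-path through $S$-states all $R'$-related to $u$.'' The transfer condition of divergence-blind stuttering equivalence demands $s\,R\,t_i$ for \emph{every} intermediate state $t_i$ ($i<k$) of the mimicking path, whereas clause 1(b) of Definition 6.2 only guarantees that the \emph{endpoint} $t'$ of the $\tau$-sequence satisfies $s\,R\,t'$; the states strictly between $t$ and $t'$ are unconstrained. For an arbitrary branching bisimulation $R$ these intermediate states need not be $R$-related to $s$ (a branching bisimulation need not relate all pairs of bisimilar states), so your $R'$ need not satisfy the dbs transfer condition as defined. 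This is exactly the subtlety that the Stuttering Lemma of van Glabbeek and Weijland addresses: for the \emph{largest} branching bisimulation $\bis_b$, if $t_0\pijl{\tau}{}\cdots\pijl{\tau}{}t_k$ with $s\bis_b t_0$ and $s\bis_b t_k$, then $s\bis_b t_i$ for all $i$. The repair is therefore to carry out your forward construction not with an arbitrary witnessing $R$ but with $\bis_b$ itself (which suffices, since ``branching bisimilar'' means related by some, equivalently by the largest, branching bisimulation), and to invoke the Stuttering Lemma -- whose proof is itself not entirely trivial -- to license the claim about intermediate states. With that lemma in hand, the rest of your case analysis goes through.
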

If we start out with an LTS with $n$ states and $m$ transitions then its
embedding has at most $m+n$ states and $2m$ transitions. Hence,
the algorithm requires $O(m\log (n{+}m))$ time. As $m$ is at most $|\mathit{Act}|n^2$ this is
also equal to $O(m(\log|\mathit{Act}|{+}\log n))$.

A final note is that the algorithm can also easily be adapted to determine divergence-sensitive
branching bisimulation~\cite{GW96}, by simply adding a self loop indicating divergence to 
those states on a $\tau$-loop, similar to the way stuttering equivalence is calculated using
divergence-blind stuttering equivalence.

\section{Experiments}

\begin{figure}[t]
\centering
\subfloat{
\centering
\scalebox{0.24}{
\includegraphics{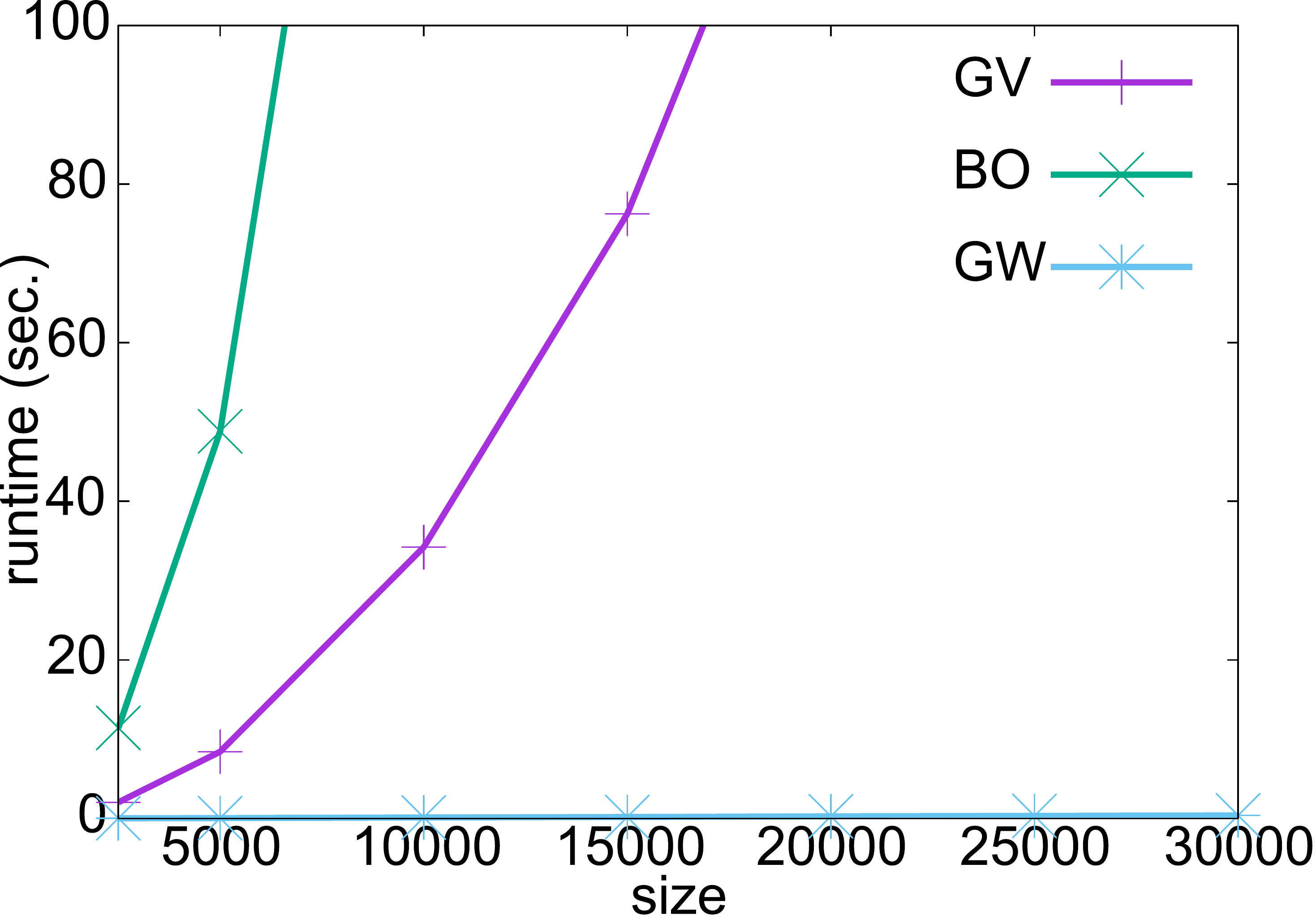}
}
}
\subfloat{
\centering
\scalebox{0.24}{
\includegraphics{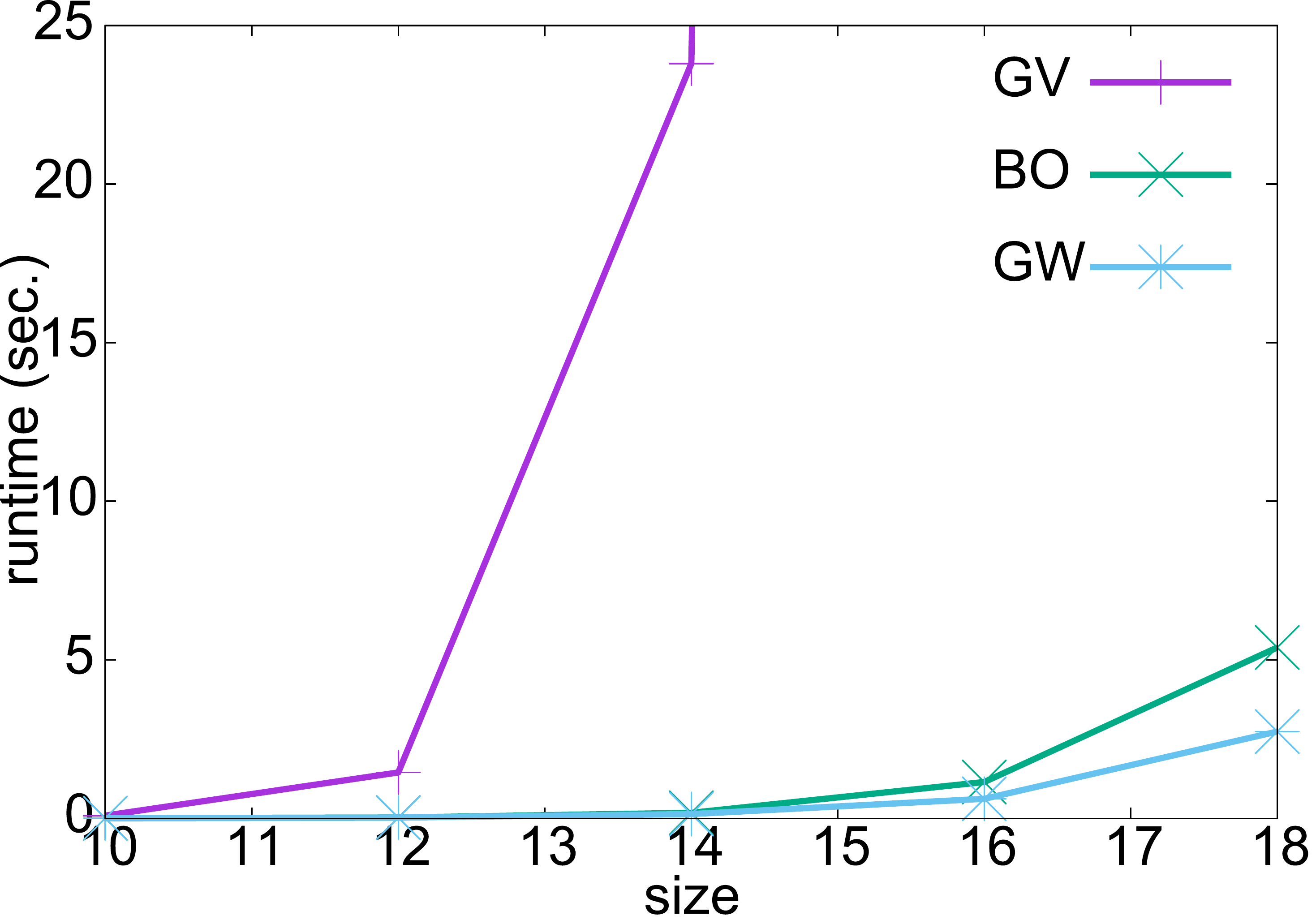}
}
}
\caption{Runtime results for $(a{\cdot} \tau)^\mathit{size}$ sequences (left) and trees of depth $\mathit{size}$ (right)}
\label{fig:results}
\end{figure}

The new algorithm has been implemented as part of the mCRL2 toolset~\cite{CGKSVWW13}, 
which offers implementations of GV and the algorithm
by Blom \& Orzan~\cite{BO05} that distinguishes states by their connection to blocks via their
outgoing transitions. We refer to the latter as BO. The performance of GV and BO can be very different on 
concrete examples.
We have extensively tested the new algorithm by applying it to thousands of randomly
generated LTSs and comparing the results with those of the other algorithms.

We experimentally compared the performance of GV, BO, and the implementation of the new algorithm (GW). All experiments involve the
analysis of LTSs, which for GW are first transformed to Kripke structures using the translation of
section~\ref{sec:bbisim}. The reported runtimes do not include the time to read the input LTS
and write the output, but the time it takes to translate the LTS to a Kripke structure and 
to reduce strongly connected components
is included.

Practically all experiments have been performed on machines running \textsc{CentOS Linux},
with an \textsc{Intel} E5-2620 2.0 GHz CPU and 64 GB RAM. Exceptions to this are the final two entries
in table~\ref{tab:results}, which were obtained by using a machine running \textsc{Fedora} 12, with
an \textsc{Intel Xeon} E5520 2.27 GHz CPU and 1 TB RAM.

Figure~\ref{fig:results} presents the runtime results for two sets of experiments designed to demonstrate
that GW has the expected scalability. At the left are the results of analysing single sequences of the shape 
$(a{\cdot} \tau)^n$. As the length $2n$ of such a sequence is increased,
the results show that the runtimes of both BO and GV increase at least quadratically, while
the runtime of GW grows linearly. All algorithms require $n$ iterations, in which BO and GV walk over all the states in
the sequence, but GW only moves two states into a new block.
At the right of figure~\ref{fig:results}, the results are displayed of analysing trees of depth $n$ that up to level $n{-}1$ 
correspond with a binary tree of $\tau$-transitions. Each state at level $n{-}1$ has a uniquely labelled outgoing transition 
to a state in level $n$. This example is particularly suitable for BO which only needs one iteration to obtain the stable
partition. Still GW beats BO by repeatedly splitting off small blocks of size $2(k-1)$ if a state at level $k$ is the splitter.

Table~\ref{tab:results} contains results for minimising LTSs from the VLTS benchmark set\footnote{\url{http://cadp.inria.fr/resources/vlts}.} and the mCRL2 toolset\footnote{\url{http://www.mcrl2.org}.}.
For each case, the best runtime result has been highlighted in bold.
Some characteristics of each case are given on the left, in particular the number of states ($n$) and transitions ($m$)
in the original LTS and the number of states (\textit{min.} $n$) and transitions (\textit{min.} $m$) in the minimized LTS.

The final two cases stem from mCRL2 models distributed with the mCRL2 toolset as follows:

\begin{itemize}
\item \textbf{dining\_14} is an extension of the Dining Philosophers model to fourteen philosophers;
\item \textbf{1394-fin3} extends the 1394-fin model to three processes and two data elements.
\end{itemize}

The experiments demonstrate that also when applied to actual 
state spaces of real models, GW generally outperforms the best of the other algorithms, often with a factor 10 and sometimes with a factor 100. This difference 
tends to grow as the LTSs
get larger. GW's memory usage is only sometimes substantially higher than GV's and BO's, which surprised us given the amount of 
required bookkeeping.

\begin{landscape}
\begin{table}[pth]
%\scriptsize
   \centering
  \begin{tabular}{|l|r|r|r|r|r|r|r|r|r|r|}
    \hline
    \textbf{Model} & \textbf{  \textit{n}} & \centering \textbf{ \textit{m}} & \centering \textbf{ min.\ \textit{n}} & \centering \textbf{ min.\ \textit{m}} & \textbf{time GV} & \textbf{me.\ GV} & \textbf{time BO} & \textbf{me.\ BO} & \textbf{time GW} & \textbf{me.\ GW} \\
    \hline
    \hline
%vasy\_25       & 25,217    & 25,216     & 25,217   & 25,216   & 12.97   & 65    & 0.11     & 55    & \textbf{0.16}    & 80    \\
vasy\_40       & 40,006    & 60,007     & 20,003   & 40,004   & 142.77  & 65    & 762.69   & 62    & \textbf{0.34}    & 93    \\
vasy\_65     & 65,537    & 2,621,480   & 65,536   & 2,621,440 & 239.67  & 437   & 47.88    & 645   & \textbf{20.07}   & 2,481  \\
vasy\_66     & 66,929    & 1,302,664   & 51,128   & 1,018,692 & \textbf{7.42}    & 208   & 16.16    & 356   & 9.05    & 853   \\
vasy\_69      & 69,754    & 520,633    & 69,753   & 520,632  & \textbf{3.98}    & 155   & 12.65    & 171   & 4.53    & 493   \\
%vasy\_83      & 83,436    & 325,584    & 42,195   & 197,200  & 12.62   & 113   & 3.30     & 119   & \textbf{2.11}    & 272   \\
vasy\_116     & 116,456   & 368,569    & 22,398   & 87,674   & 3.84    & 95    & 15.73    & 128   & \textbf{2.68}    & 142   \\
vasy\_157     & 157,604   & 297,000    & 3,038    & 12,095   & 6.98    & 97    & 6.80     & 110   & \textbf{1.08}    & 129   \\
vasy\_164    & 164,865   & 1,619,204   & 992     & 3,456    & \textbf{3.89}    & 251   & 20.20    & 316   & 5.38    & 246   \\
vasy\_166     & 166,464   & 651,168    & 42,195   & 197,200  & 21.60   & 153   & 6.20     & 177   & \textbf{3.89}    & 376   \\
cwi\_214      & 214,202   & 684,419    & 478     & 1,612    & \textbf{0.87}    & 140   & 29.92    & 197   & 2.64    & 140   \\
cwi\_371      & 371,804   & 641,565    & 2,134    & 5,634    & 42.70   & 179   & 17.37    & 261   & \textbf{3.12}    & 168   \\
cwi\_566     & 566,640   & 3,984,157   & 198     & 791     & 1683.28 & 454   & 26.24    & 531   & \textbf{19.94}   & 454   \\
vasy\_574   & 574,057   & 13,561,040  & 3,577    & 16,168   & 105.10  & 1,766  & 487.01   & 2,192  & \textbf{40.18}   & 1,495  \\
%vasy\_1112   & 1,112,490  & 5,290,860   & 265     & 1,300    & \textbf{19.49}   & 779   & 48.39    & 1,016  & 22.91   & 902   \\
cwi\_2165    & 2,165,446  & 8,723,465   & 4,256    & 20,880   & 80.56   & 1,403  & 387.93   & 2,409  & \textbf{59.49}   & 1,948  \\
cwi\_2416   & 2,416,632  & 17,605,592  & 730     & 2,899    & 1,679.55 & 1,932  & \textbf{59.29}    & 2,660  & 90.69   & 1,932  \\
vasy\_2581  & 2,581,374  & 11,442,382  & 704,737  & 3,972,600 & 2,592.74 & 1,690  & 463.52   & 2,344  & \textbf{76.16}   & 5,098  \\
vasy\_4220  & 4,220,790  & 13,944,372  & 1,186,266 & 6,863,329 & 3,643.08 & 2,054  & 863.74   & 2,951  & \textbf{119.20}  & 7,287  \\
vasy\_4338 & 4,338,672 & 15,666,588 & 704,737 & 3,972,600 & 5,290.54 & 2,258 & 587.87 & 3,026 & \textbf{109.21} & 6,927 \\
vasy\_6020  & 6,020,550  & 19,353,474  & 256     & 510     & 130.76  & 2,045  & 95.76    & 3,482  & \textbf{45.54}   & 2,045  \\
vasy\_6120  & 6,120,718  & 11,031,292  & 2,505    & 5,358    & 546.11  & 1,893  & 291.30   & 2,300  & \textbf{81.05}   & 3,392  \\
cwi\_7838   & 7,838,608  & 59,101,007  & 62,031   & 470,230  & 745.33  & 6,319  & 11,667.98 & 11,027 & \textbf{617.46}  & 14,456 \\
vasy\_8082  & 8,082,905  & 42,933,110  & 290     & 680     & 288.45  & 6,098  & 677.28   & 7,824  & \textbf{200.72}  & 6,108  \\
vasy\_11026 & 11,026,932 & 24,660,513  & 775,618  & 2,454,834 & 5,005.61 & 3,642  & 2,555.30  & 5,235  & \textbf{225.20}  & 10,394 \\
vasy\_12323 & 12,323,703 & 27,667,803  & 876,944  & 2,780,022 & 5,997.26 & 4,068  & 2,068.52  & 5,770  & \textbf{256.70}  & 11,575 \\
cwi\_33949 & 33,949,609 & 165,318,222 & 12,463   & 71,466   & 1,684.56 & 21,951 & 11,635.09 & 42,162 & \textbf{1,459.92} & 37,437 \\
dining\_14 & 18,378,370 & 164,329,284 & 228,486 & 2,067,856 & 1,264.67 & 20,155 & 3,010.17 & 31,201 & \textbf{1,100.91} & 20,155 \\
1394-fin3 & 126,713,623 & 276,426,688 & 160,258 & 538,936 & 229,217.0 & 26,000 & 15,319.00 & 75,000 & \textbf{1,516.00} & 45,000 \\
\hline
  \end{tabular}
\caption{Runtime (in sec.) and memory use (in MB) results for GV, BO, and GW}

\label{tab:results}

\end{table}
\end{landscape}

\bibliographystyle{plain}

\end{document}